\documentclass[runningheads]{llncs}
\usepackage{cite}
\usepackage{amssymb,amsfonts,amsmath}
\usepackage{algorithmic}
\usepackage{graphicx}
\usepackage{textcomp}
\usepackage{graphicx}
\usepackage{subfigure}
\usepackage{booktabs, multirow}
\usepackage{algorithmic}
\usepackage{svg}
\usepackage{siunitx}
\usepackage{tikz}
\usetikzlibrary{arrows.meta, positioning}
\usepackage[ruled,vlined,linesnumbered]{algorithm2e}

\begin{document}
\title{SimCert: Probabilistic Certification for Behavioral Similarity in Deep Neural Network Compression}

\author{Jingyang Li\inst{1} \and Fu Song\inst{2}\and Guoqiang Li\inst{1}}

\institute{Shanghai Jiao Tong University, Shanghai 200240, China\\
\email{\{lijjjjjj, li.g\}@sjtu.edu.cn}
\and
Institute of Software, Chinese Academy of Sciences, Beijing 100190, China\\
\email{songfu@ios.ac.cn}
}

\maketitle

\begin{abstract}
Deploying Deep Neural Networks (DNNs) on resource-constrained embedded systems requires aggressive model compression techniques like quantization and pruning. However, ensuring that the compressed model preserves the behavioral fidelity of the original design is a critical challenge in the safety-critical system design flow. Existing verification methods often lack scalability or fail to handle the architectural heterogeneity introduced by pruning. In this work, we propose SimCert, a probabilistic certification framework for verifying the behavioral similarity of compressed neural networks. Unlike worst-case analysis, SimCert provides quantitative safety guarantees with adjustable confidence levels. Our framework features: (1) A dual-network symbolic propagation method supporting both quantization and pruning; (2) A variance-aware bounding technique using Bernstein’s inequality to tighten safety certificates; and (3) An automated verification toolchain. Experimental results on ACAS Xu and computer vision benchmarks demonstrate that SimCert outperforms state-of-the-art baselines.
\end{abstract}

\section{Introduction}

The deployment of Deep Neural Networks (DNNs) in safety-critical cyber-physical systems (CPS), such as autonomous vehicles and robotic controllers, requires rigorous guarantees on correctness and reliability~\cite{Mistry2022,Yin2022}. However, running large DNNs on edge devices is difficult due to limited memory and power. To solve this, developers use model compression techniques like quantization \cite{DBLP:journals/corr/HanMD15,DBLP:conf/iccv/GongLJLHLYY19} and pruning \cite{DBLP:conf/iclr/LiuSZHD19, DBLP:conf/cvpr/MolchanovMTFK19,Bachiri2024}. These steps reduce the model size and calculation costs, allowing complex models to run efficiently on resource-constrained hardware.

The tension between computational efficiency and functional preservation presents a core challenge in modern lightweight AI deployment. As noted by Matos Jr. et al. \cite{Matos2024}, relying solely on statistical accuracy measures is inadequate for safety-critical applications, as they fail to capture the network's vulnerability to specific inputs where performance may degrade significantly. Therefore, there is a critical need for formal guarantees of behavioral similarity across the input space.

Current verification approaches exhibit significant limitations in this context. First, prevailing methods primarily offer qualitative assessments (pass/fail outcomes) based on worst-case analysis. This requirement is often \textit{overly stringent} for practical applications: verification fails if even a single input triggers an unsafe output, regardless of how statistically rare such inputs may be~\cite{Matos2024}. In large-scale systems, discarding a highly efficient compressed model due to a negligible probability of failure represents a wasted optimization opportunity. Furthermore, most differential verification tools (e.g. ReluDiff~\cite{reludiff}, NeuroDiff~\cite{neurodiff}) rely on structural homogeneity assumptions, making them incompatible with architectural changes introduced by pruning~\cite{Zhong2023}.

To address this stringency, probabilistic verification offers a more practical alternative by providing statistical guarantees \cite{PROVEN}. However, a second layer of conservatism arises from the mathematical tools used. Existing frameworks typically rely on concentration inequalities like Hoeffding's inequality \cite{PROVEN}, which is \textit{variance-agnostic}—it bounds the error probability based solely on the value range. In the context of model compression, the behavioral difference between the original and compressed networks inherently exhibits \textit{low variance}. By neglecting this variance information, standard methods produce overly loose bounds. Consequently, they often fail to certify safe models (false rejections), effectively negating the benefit of relaxing the worst-case requirement.

Addressing these gaps requires solving two core technical challenges: (1) developing formal methods for \textit{quantitative} similarity verification that exploit the low-variance property of compression error, and (2) enabling verification across \textit{architecturally heterogeneous} networks resulting from pruning.

In this work, we propose \textbf{SimCert}, a probabilistic certification framework for verifying neural network behavioral similarity. We define similarity through quantitative thresholds, guaranteeing that for a given error tolerance $\epsilon$ and confidence level $\gamma$, the output difference remains within bounds.

SimCert makes the following contributions:
\begin{enumerate}
    \item \textbf{A Variance-Aware Probabilistic Certification Framework}: We propose SimCert, the first probabilistic framework that explicitly exploits the low-variance characteristic of compression errors. By integrating Bernstein’s inequality into dual-network symbolic propagation, we achieve significantly tighter certified radii compared to variance-agnostic methods (e.g., PROVEN~\cite{PROVEN}) without incurring significant computational overhead.

    \item \textbf{Unified Handling of Architectural Heterogeneity}: We introduce a structural alignment mechanism that enables the dual-network verification of architecturally heterogeneous models (e.g., pruned networks). Unlike existing differential verifiers~\cite{reludiff,neurodiff,QEBVerif} that are restricted to identical topologies, SimCert provides a unified solution for both quantization (parametric changes) and pruning (structural changes).

    \item \textbf{Automated Toolchain and Experimental Analysis}: We implement SimCert as an automated verification tool compatible with PyTorch. Extensive experiments on ACAS Xu, MNIST, and CIFAR-10 demonstrate that SimCert provides larger certified radii and higher confidence than state-of-the-art baselines, offering a robust solution for ensuring the reliability of embedded AI systems.
\end{enumerate}

\section{Preliminaries}\label{pre}
\subsection{Neural Networks}

A neural network consists of an input layer, several hidden layers, and an output layer. These layers can perform either linear or non-linear operations. Fully connected layers are a typical example of linear layers,
while activation layers are used for non-linear operations, such as the ReLU function.
\begin{figure}[!ht]
    \centering
    \includegraphics[width=0.95\linewidth]{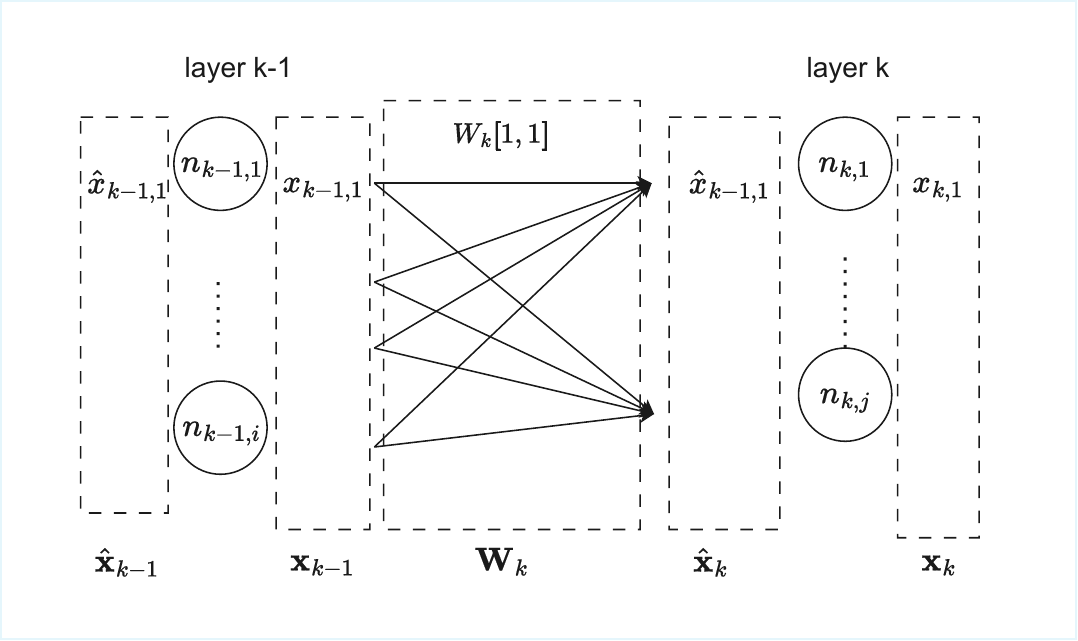}
    \caption{Diagram of network notations}
    \label{fig:diagram of network notation}
\end{figure}

In our paper, we use $[L]$ to represent the set of strictly positive integers up to $L$. As depicted in Figure \ref{fig:diagram of network notation}, $\mathbf{W}_k$ denotes the weight matrix of the $k$-th layer, $n_{k,j}$ refers to the $j$-th node in the $k$-th layer, $x_{k,j}$ refers to the input of the $j$-th node in the $k$-th layer, $\hat{x}_{k,j}$ refers to the output of the $j$-th node in the $k$-th layer, and $W_k[i,j]$ represents the weight of the edge from $n_{k-1,i}$ to $n_{k,j}$. We use bold symbols for vectors and matrices, like $\mathbf{x}_k$ and $\mathbf{W}_k$, and regular symbols for scalars, such as $x_{k-1,i}$ and $W_k[i,j]$.

Using this notation, we define an ReLU neural network with $L$ layers through its weight matrices $\mathbf{W}_k$ and bias vectors $\mathbf{b}_k$, where $k$ ranges from $1$ to $L$. We index the input layer as layer $0$ by convention. The neural network computes the output for an input $\hat{\mathbf{x}}_0$ from a bounded input domain $\mathcal{X}$ by recursively applying linear layers: $\mathbf{x}_k = \mathbf{W}_k\hat{\mathbf{x}}_{k-1}+\mathbf{b}_k$, and ReLU layers: $\hat{\mathbf{x}}_k=\max(\mathbf{0},\mathbf{x}_k)$ until the output $\mathbf{x}_L$ is obtained.

\subsection{Neural Network Compression Techniques}
Large-scale DNNs (Deep Neural Networks) are limited by their size and the computational cost, making it difficult to deploy them on edge devices. Therefore, a series of network compression methods have been derived.

One important type of approaches is model quantization~\cite{DBLP:journals/corr/HanMD15,DBLP:conf/iccv/GongLJLHLYY19}. Model quantization is a way to compress the parameters of a network by representing the parameters (weights) and feature maps (activations) of a dense neural network, originally represented as floating-point values, using fixed-point (integer) representation. Finally, the fixed-point data is dequantized back into floating-point data to obtain the result. Model quantization can speed up model inference and reduce model memory usage. Model quantization is based on the deep network's tolerance to noise, as it essentially introduces a certain amount of noise (quantization error) to the deep network. If the number of quantization bits is appropriate, model quantization will not lead to significant loss in accuracy.

Another important type of compression techniques for neural networks are network pruning~\cite{DBLP:conf/nips/ChoAN23,DBLP:conf/iclr/LiuSZHD19,DBLP:conf/cvpr/MolchanovMTFK19}. Neural network pruning aims to remove redundant parts of a network that consume a significant amount of resources while maintaining good performance. Despite the remarkable learning capacity of large neural networks, not all parts of the network are actually useful after the training process. The idea behind neural network pruning is to eliminate these unused parts without affecting the network's performance. Formally, given a neural network $f$ and its pruned version $f'$, network pruning is equivalent to the optimization problem:
\begin{align*}
    \max_{f'}\quad &Acc(f',D) \\
     \text{s.t. } \quad &Budget(f\leftarrow f')\leq C,
\end{align*}
where $Acc(f',D)$ represents the accuracy of pruned network $f'$ on dataset $D$ and $Budget(f\leftarrow f')$ represents the budget of the pruning process.

It should be noted that the majority of current quantization techniques~\cite{Xie-2506-17870,HuangS0LHWXC22,BhalgatLNBK20,llmint8} do not modify the network architectures, thereby enabling the formal verification of accuracy loss in quantization through existing differential verification tools. Conversely, with existing network pruning techniques, empirical evaluation of accuracy loss predominates, lacking formal guarantees. Unfortunately, existing differential verification tools do not currently support such scenarios.

\subsection{Qualitative Similarity Certification}

\begin{figure}[!ht]
  \centering
  \subfigure[Original neural network]{
    \includegraphics[width=0.9\columnwidth]{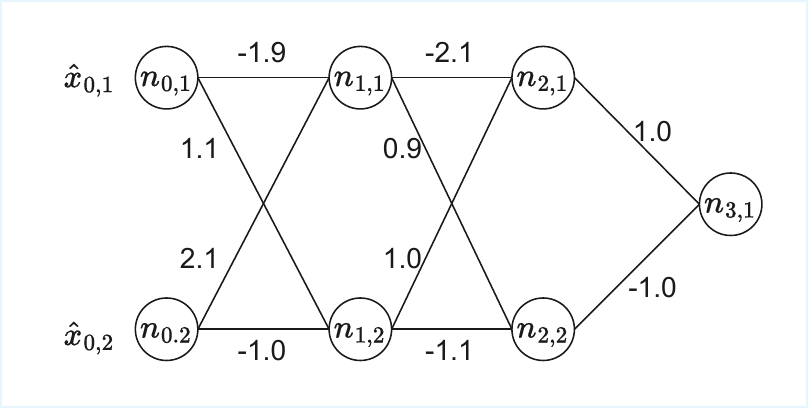}
    \label{fig:f}
  }~

  \subfigure[Quantized neural network]{
    \includegraphics[width=0.9\columnwidth]{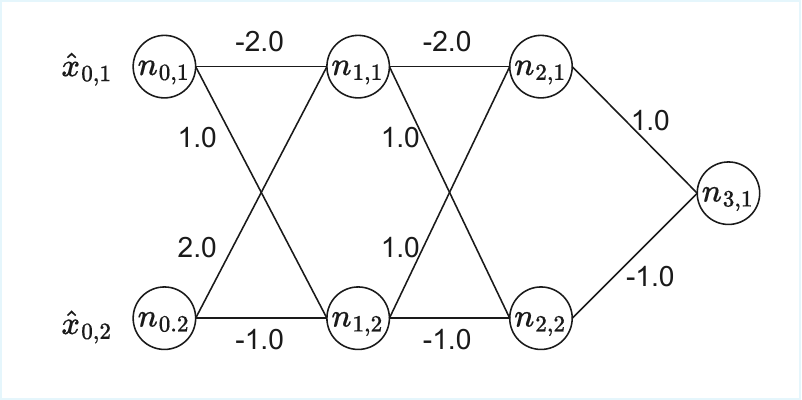}
    \label{fig:f'}
  }~

  \caption{The pair of neural networks of the original one and the quantized one. }
  \vspace{-3mm}
  \label{fig:network_pair}
\end{figure}
Qualitative similarity certification establishes behavioral equivalence between neural networks during compression (as illustrated in Figure~\ref{fig:network_pair}), a process also known as differential verification~\cite{reludiff,neurodiff}. This approach formally characterizes the relationship between two networks $f$ (original) and $f'$ (compressed) through the following definition:

\begin{definition}[Qualitative $\epsilon$-Similarity Certification]\label{qsc}
Given a pair of neural networks $\langle f,f'\rangle:\mathcal{X}\subseteq\mathbb{R}^{n}\rightarrow\mathbb{R}^{m}$ where $\mathcal{X}$ is a compact input domain, and without loss of generality assuming $m=1$, for predefined error threshold $\epsilon>0$, we say $f'$ passes $\epsilon$-similarity certification relative to $f$ on $\mathcal{X}$ if:
\begin{equation*}
\forall x\in\mathcal{X},\ |f(x)-f'(x)| \leq \epsilon.
\end{equation*}
\end{definition}

\begin{figure}[h!]
    \centering
    \includegraphics[width=0.9\linewidth]{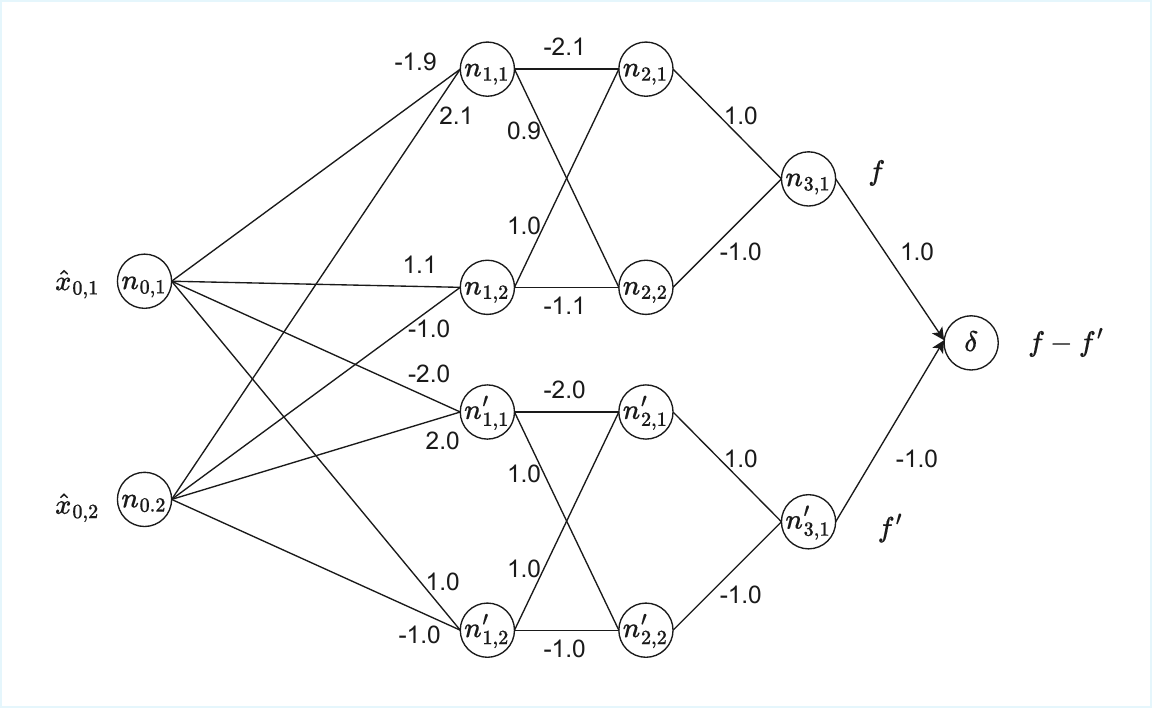}
    \caption{Naive encoding to compute the difference of the two neural networks}
    \label{fig:naive}
\end{figure}

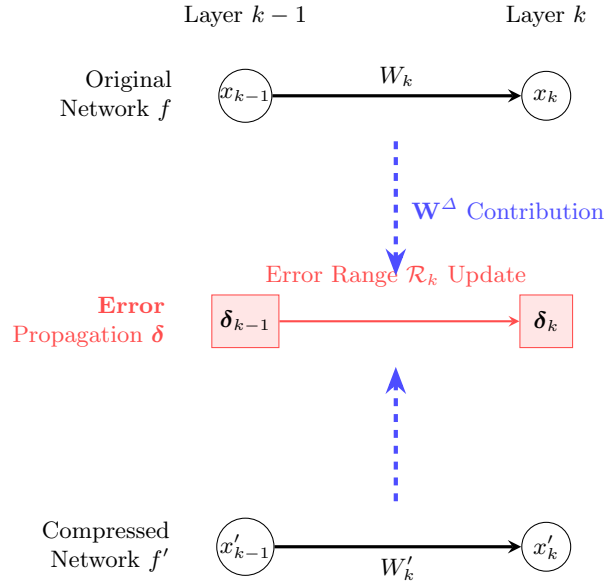
\begin{figure}[!t]
\centering
\begin{tikzpicture}[
    node distance=2.2cm, 
    layer/.style={rectangle, draw, minimum size=0.8cm},
    data/.style={circle, draw, minimum size=0.65cm, inner sep=0pt, align=center},
    error/.style={rectangle, draw=red!70, fill=red!10, minimum size=0.7cm, align=center, font=\small},
    propagation/.style={->, >=stealth, very thick, black},
    error_prop/.style={->, >=stealth, thick, red!70},
    weight_diff/.style={->, >=stealth, dashed, shorten >=3pt, shorten <=3pt, ultra thick, blue!70} 
]


\def\yOrig{3.0} 
\def\yError{0}  
\def\yComp{-3.0} 
\def\xPos{4.0} 

\node[data] (xkm1) at (0, \yOrig) {$x_{k-1}$};
\node[error] (deltakm1) at (0, \yError) {$\boldsymbol{\delta}_{k-1}$};
\node[data] (xpkm1) at (0, \yComp) {$x'_{k-1}$};

\node[data] (xk) at (\xPos, \yOrig) {$x_k$};
\node[error] (deltak) at (\xPos, \yError) {$\boldsymbol{\delta}_k$};
\node[data] (xpk) at (\xPos, \yComp) {$x'_k$};


\draw[propagation] (xkm1) -- (xk) node[midway, above] {$W_k$};

\draw[propagation] (xpkm1) -- (xpk) node[midway, below] {$W'_k$};

\draw[error_prop] (deltakm1) -- (deltak);
\node[red!70] at (\xPos/2, 0.6) {\small Error Range $\mathcal{R}_k$ Update};

\draw[weight_diff, -{Stealth}] (\xPos/2, \yOrig - 0.5) -- (\xPos/2, \yError + 0.5) node[midway, right, align=center, blue!70, xshift=2pt] {$\mathbf{W}^\Delta$ Contribution};
\draw[weight_diff, -{Stealth}] (\xPos/2, \yComp + 0.5) -- (\xPos/2, \yError - 0.5);


\node[anchor=south] at (0, \yOrig + 0.8) {Layer $k-1$};
\node[anchor=south] at (\xPos, \yOrig + 0.8) {Layer $k$};

\node[left=0.5cm of xkm1, align=right] (label_orig) {Original\\Network $f$};
\node[left=0.5cm of deltakm1, align=right, red!70] (label_error) {\textbf{Error}\\Propagation $\boldsymbol{\delta}$};
\node[left=0.5cm of xpkm1, align=right] (label_comp) {Compressed\\Network $f'$};

\end{tikzpicture}
\caption{\label{fig:dual_network_propagation} Conceptual illustration of Dual-Network Symbolic Error Propagation.}
\end{figure}

A naive approach encodes similarity constraints by composing $f$ and $f'$ into a single feed-forward network (Figure~\ref{fig:naive}), equivalent to computing $f - f'$. While theoretically verifiable using sound and complete techniques, this method exhibits significant value range inflation in the output layer during worst-case analysis.

Methods like ReluDiff~\cite{reludiff} and NeuroDiff~\cite{neurodiff} overcome this limitation by exploiting structural and behavioral similarities. As demonstrated in Figure~\ref{fig:dual_network_propagation}, these approaches adopt a \textbf{Dual-Network Symbolic Propagation} mechanism, simultaneously tracking the original network ($f$) and the compressed network ($f'$). The core principle involves:

\begin{itemize}
    \item \textbf{Layer-wise Differencing:} The method explicitly models the difference between intermediate outputs, denoted by the error term $\boldsymbol{\delta}_k = x_k - x'_k$. This is achieved without separating calculations for $f(\mathbf{x})$ and $f'(\mathbf{x})$.
    \item \textbf{Structural Exploitation:} As the figure shows, the symbolic propagation of the error $\boldsymbol{\delta}_{k-1}$ to $\boldsymbol{\delta}_k$ is driven by the weight difference $\mathbf{W}^\Delta = \mathbf{W}_k - \mathbf{W}'_k$. This allows the verification to leverage the strong structural correlation ($||\mathbf{W}^\Delta||$ is small) to achieve tighter bounds.
    \item \textbf{Range Refinement:} Symbolic methods employ refinement processes (e.g., using linear relaxations) to tighten the error range $\mathcal{R}_k$ (representing $\boldsymbol{\delta}_k$) and employ refinement processes (e.g., using linear relaxations) to tighten these bounds.
\end{itemize}
This differential propagation strategy significantly improves verification efficiency and accuracy by effectively bounding the worst-case error accumulation based on network similarity.

\subsection{Probabilistic Verification}
Traditional formal verification methods focus on deterministic guarantees under worst-case scenarios, whereas practical applications often require robustness evaluation under probabilistic distributions. Probabilistic verification introduces probability measures to establish a quantitative analysis framework for neural network behavior under stochastic perturbations.
Notable approaches include the \textbf{PROVEN} framework~\cite{PROVEN}, which certifies distribution-aware robustness guarantees under specified noise distributions, and hybrid methods~\cite{ProbabilisticPracticalRobustness} combining abstract interpretation with Monte Carlo sampling to balance efficiency and probabilistic assurances for practical deployment. Statistical evaluation techniques~\cite{StatisticalRobustnessNN} leverage adaptive sampling to estimate perturbation tolerance while quantifying violation probabilities, complementing qualitative verification tools. Algorithmic advancements like the branch-and-bound PV method~\cite{PVBranchBound} iteratively refine probability bounds through dynamic branching, achieving higher verification success rates. For quantitative risk assessment, probabilistic star-based reachability analysis~\cite{QuantitativeVerificationProbStars,TranCLLOHF25} enables efficient uncertainty propagation modeling in safety-critical systems. These methods collectively enhance verification precision, scalability, and applicability across diverse scenarios. However, none address the critical challenge of similarity certification between original and compressed neural networks in model compression settings.

\section{Problem Formulation}\label{problem}

\subsection{$\epsilon$-$\gamma$ similarity certification}
Consider a feed-forward neural network $f$ with input layer, two hidden layers, and output layer. After transformations like fine-tuning, quantization, or model transfer, we obtain a modified network $f'$. A core concern for developers lies in assessing how the transformation alters $f$'s behavior: while the modified $f'$ may gain practical advantages (e.g., faster inference from quantization, better domain adaptation from transfer), it could also introduce subtle behavioral deviations that risk failing application requirements.

Traditional qualitative verification~\cite{reludiff,neurodiff,QEBVerif} enforces strict behavioral equivalence through $\epsilon$-similarity certification, requiring:
\[
\forall x \in \mathcal{X}: |f(x)-f'(x)| \leq \epsilon
\]
This requirement, however, can be overly stringent for practical applications. First, its universal quantifier means certification fails if even a single input $x\in\mathcal X$ satisfies $|f(x)-f'(x)|>\epsilon$, regardless of how statistically rare such inputs may be. Second, designers of neural network systems typically assert statistical behavioral guarantees, claiming systems satisfy target properties with high probability rather than universally~\cite{BalutaSSMS19}. Third, qualitative verification provides no insight into the scale of deviation. For example, two compressed networks $f'_1$ and $f'_2$ might both fail strict $\epsilon$-similarity, but $f'_1$ deviates for 0.5\% of inputs while $f'_2$ deviates for 10\%. This is a critical distinction for risk assessment, which qualitative methods cannot capture.

To address this limitation, we introduce probabilistic $\epsilon$-$\gamma$ similarity certification, which relaxes the universal quantifier while providing statistical guarantees. Unlike traditional qualitative checks, our certification focuses on the frequency of inputs where $f$ and $f'$ satisfy $\epsilon$-similarity: it requires that at least $\gamma$ proportion of inputs in $\mathcal{X}$ meet $|f(x)-f'(x)| \leq \epsilon$. The notion of $\epsilon$-$\gamma$ similarity certification is formally defined as follows:

\begin{definition}[$\epsilon$-$\gamma$ Similarity Certification]
Given a pair of neural networks $\langle f,f'\rangle:\mathcal{X}\subseteq\mathbb{R}^{n}\rightarrow\mathbb{R}$ where $\mathcal{X}$ is a compact input domain, for predefined error threshold $\epsilon>0$ and confidence level $\gamma\in(0,1)$, we say $f'$ passes $\epsilon$-$\gamma$ similarity certification relative to $f$ on a measurable subset $\mathcal{C}\subseteq\mathcal{X}$ if:
\begin{equation*}
\frac{\mu(\mathcal{C})}{\mu(\mathcal{X})} \geq \gamma \quad \text{and} \quad \forall x\in\mathcal{C},\ |f(x)-f'(x)| \leq \epsilon
\end{equation*}
where $\mu(\cdot)$ denotes the Lebesgue measure.
\end{definition}
Here, the Lebesgue measure~\cite{tao2011introduction,schilling2017measures} generalizes the concept of (1) Length in 1D ($\mathbb{R}$); (2) Area in 2D ($\mathbb{R}^2$); (3) Volume in higher dimensions ($\mathbb{R}^n$), thus quantifying the "size" of sets in Euclidean space.

\subsection{Analytical Setup: Dual-Network Notation and Certified Radius}
To facilitate the certification of behavioral similarity, we employ a dual-network symbolic propagation mechanism. This mechanism is crucial as it explicitly models the difference between $f$ and $f'$ activations, which is more precise than analyzing the difference network $f-f'$ separately (as discussed in Section~\ref{approach}).

\noindent\textbf{Dual-Network Notation.}
Let $W_k$ and $b_k$ denote the weight matrix and bias vector of the $k$-th layer in $f$. Similarly, $W'_k$ and $b'_k$ denote those in $f'$. We define the layer-wise deviations as:
$$W^{\Delta}_k = W_k - W'_k, \quad b^{\Delta}_k = b_k - b'_k$$
During the symbolic propagation, we track the pre-activation $x_{k,j}$ and post-activation $\hat{x}_{k,j}$ of neurons in $f$. We also explicitly model the accumulated error $\delta_{k,j}$ between the two networks. The layer-wise pre-activation error is defined as $\delta_k = x_k - x'_k$, and the post-activation error is $\hat{\delta}_k = \hat{x}_k - \hat{x}'_k$. This explicit modeling of $W^\Delta$ is central to SimCert, allowing us to leverage structural correlations to cancel out common signals and avoid value range inflation.

\begin{definition}(Certified Radius)\label{def:certified_radius}
    Given a reference input $x_0$, a confidence level $\gamma$, and a similarity threshold $\epsilon$, the \textbf{Certified Radius} $R(\epsilon, \gamma)$ is defined as the maximum radius $r$ of an $\ell_\infty$-ball centered at $x_0$ such that the probabilistic similarity condition holds:
$$R(\epsilon, \gamma) = \max \{ r \mid \mathbb{P}_{x \sim \mathcal{D}_{x_0, r}}(|f(x) - f'(x)| \le \epsilon) \ge \gamma \}$$
where $\mathcal{D}_{x_0, r}$ represents the input distribution (e.g., uniform) within $B_\infty(x_0, r)$.
\end{definition}
This metric quantifies the maximum operational envelope (radius $r$) for which the compressed model is statistically guaranteed to remain consistent with the original model with at least $\gamma$ probability.

\subsection{Technical Challenges and Proposed Solutions}

\noindent\textbf{C1: Structural Correspondence Problem via Zero-Padding.}
Building on existing definitions~\cite{reludiff,neurodiff}, the dual-network symbolic propagation assumes identical architectures to compute layer-wise differences. This mechanism fails for networks transformed by pruning, which alters the network topology. For instance, when pruning removes neuron $j$ in layer $k$, the subsequent layer's input dimensions are mismatched. The dual-network propagation expects layer-wise correspondence between activations ($\hat{x}_k$ vs $\hat{x}'_k$) to compute their difference $\hat{\delta}_k$. Without alignment, error propagation fails at layer $k$ due to incompatible tensor dimensions. To resolve this, we introduce \textbf{zero-padding} (detailed in Section~\ref{approach}) to restore structural equivalence: for a pruned neuron $j$ in layer $k$, we inject a zero-valued activation at position $j$ in $f'$. This dummy neuron maintains positional alignment while ensuring zero contribution to subsequent layers ($0 \cdot W'_{k}[\cdot,j] = 0$).

\noindent\textbf{C2: Suboptimal Bounds from Hoeffding's Inequality.}
Current probabilistic verification approaches, such as PROVEN~\cite{PROVEN}, yield suboptimal bounds when adapted to network similarity certification by relying on Hoeffding's inequality. Hoeffding's inequality neglects input variance and produces loose estimates, failing to account for the concentration of the input distribution. This can lead to substantially less informative probability lower bounds. To overcome this, we derive tighter probabilistic bounds using \textbf{Bernstein's inequality}, which is variance-aware, allowing for tighter similarity certification bounds for the same confidence level.

\noindent\textit{Illustrative Example.}
We use a simple example to illustrate the advantage of our approach over existing Hoeffding's inequality based computation. Consider a one-dimensional uniform input distribution
\(
X \sim \mathrm{Uniform}(-1, 1),
\)
with linear relaxations of the network output difference given by
\(
\delta^L(x) = 0.09x - 0.45\) and \( \delta^U(x) = 0.03x + 0.40.
\)
Our objective is to evaluate the similarity probability
\(
\mathbb{P}\left(|\delta(X)| \le 0.5\right).
\)
Under this configuration, the Hoeffding-based approach yields a probability lower bound of
\(
\text{LB}_{\text{Hoeffding}} = 0.139,
\)
while our variance-aware Bernstein-based method achieves a tighter probabilistic bound of
\(
\text{LB}_{\text{Bernstein}} = 0.236,
\)
representing a 11.3\% improvement relative to Hoeffding’s bound. Critically, tighter probability bounds enable larger certified radii for the same confidence level. We detail the derivation and the theoretical advantage in Section~\ref{approach}.

\section{Probabilistic Similarity Certification: SimCert}\label{approach}
\subsection{Dual-Network Symbolic Propagation}
SimCert leverages prior dual-network symbolic propagation~\cite{reludiff,neurodiff} to simultaneously track behavioral characteristics of the original network $f$ and its compressed counterpart $f'$. This approach explicitly models weight perturbations from compression while ensuring reliable error propagation. By employing convex relaxation to linearly bound ReLU activations, the method achieves efficient inter-layer propagation. We formally introduce this framework and describe our zero-padding mechanism to extend it to heterogeneous network pairs.


The symbolic execution maintains following state vectors (using the deviation notations defined in Section 3.2):

1. \textbf{Normal activation vectors}: $\mathbf{x}_k \in \mathbb{R}^{n_k}$ (pre-activation) and $\hat{\mathbf{x}}_k \in \mathbb{R}^{n_k}$ (post-activation) for layer $k$ of $f$, where $n_k \in \mathbb{N}$ is the neuron count.

2. \textbf{Error accumulation vectors}: $\boldsymbol{\delta}_k \in \mathbb{R}^{n'_k}$ (pre-activation) and $\hat{\boldsymbol{\delta}}_k \in \mathbb{R}^{n_k}$ (post-activation) for layer $k$, encoding approximation errors from compression.

The goal of dual-network symbolic propagation is to compute linear transformations that bound the accumulated error at the output layer. This is achieved by deriving layer-wise linear approximations that constrain four key quantities: the original network's activations ($\mathbf{x}_k$), the compressed network's activations ($\hat{\mathbf{x}}_k$), and their respective accumulated errors ($\boldsymbol{\delta}_k$ and $\hat{\boldsymbol{\delta}}_k$).

Specifically, for each layer $k$, we compute coefficient matrices and bias vectors that satisfy the following bounding inequalities for any input $\mathbf{x} \in \mathbb{R}^{d_{\text{in}}}$:
\begin{align*}
\mathbf{A}^l_k \mathbf{x} + \mathbf{b}^l_k &\leq \mathbf{x}_k \leq \mathbf{A}^u_k \mathbf{x} + \mathbf{b}^u_k \\
\hat{\mathbf{A}}^l_k \mathbf{x} + \hat{\mathbf{b}}^l_k &\leq \hat{\mathbf{x}}_k \leq \hat{\mathbf{A}}^u_k \mathbf{x} + \hat{\mathbf{b}}^u_k \\
\mathbf{C}^l_k \mathbf{x} + \mathbf{d}^l_k &\leq \boldsymbol{\delta}_k \leq \mathbf{C}^u_k \mathbf{x} + \mathbf{d}^u_k \\
\hat{\mathbf{C}}^l_k \mathbf{x} + \hat{\mathbf{d}}^l_k &\leq \hat{\boldsymbol{\delta}}_k \leq \hat{\mathbf{C}}^u_k \mathbf{x} + \hat{\mathbf{d}}^u_k,
\end{align*}
where $\mathbf{A}^l_k, \mathbf{A}^u_k, \hat{\mathbf{A}}^l_k, \hat{\mathbf{A}}^u_k \in \mathbb{R}^{n_k \times d_{\text{in}}}$ are bounding matrices for activation bounds, $\mathbf{C}^l_k, \mathbf{C}^u_k, \hat{\mathbf{C}}^l_k, \hat{\mathbf{C}}^u_k \in \mathbb{R}^{n_k \times d_{\text{in}}}$ are bounding matrices for error bounds, $\mathbf{b}^l_k, \mathbf{b}^u_k, \hat{\mathbf{b}}^l_k, \hat{\mathbf{b}}^u_k \in \mathbb{R}^{n_k}$ are bounding vectors for activations, $\mathbf{d}^l_k, \mathbf{d}^u_k, \hat{\mathbf{d}}^l_k, \hat{\mathbf{d}}^u_k \in \mathbb{R}^{n_k}$ are bounding vectors for errors, and superscripts $l$ and $u$ denote lower and upper bounds respectively.
Through iterative layer-wise propagation, we ultimately obtain linear transformations that bound the accumulated error at the output layer. Specifically, we derive bounding matrices $\mathbf{C}^l_L, \mathbf{C}^u_L \in \mathbb{R}^{n_L \times d_{\text{in}}}$ and bounding vectors $\mathbf{d}^l_L, \mathbf{d}^u_L \in \mathbb{R}^{n_L}$ satisfying:
\begin{align*}
\mathbf{C}^l_L \mathbf{x} + \mathbf{d}^l_L &\leq \boldsymbol{\delta}_L \leq \mathbf{C}^u_L \mathbf{x} + \mathbf{d}^u_L
\end{align*}
This formulation constrains the true error $\boldsymbol{\delta}_L$ by linear functions of the network input $\mathbf{x}$, where the coefficients capture cumulative effects of weight deviations and approximation errors from preceding layers. These linear bounds allow direct computation of output error bounds for arbitrary input regions.

\noindent\textbf{Initialization. }The symbolic propagation begins at the input layer ($k=0$) with exact initialization:
\begin{equation*}
\hat{\mathbf{x}}_0 = \mathbf{x}, \quad \hat{\boldsymbol{\delta}}_0 = \mathbf{0}
\label{eq:init-states}
\end{equation*}
where $\mathbf{0} \in \mathbb{R}^{d_{in}}$ is the zero vector. This establishes $\hat{\mathbf{x}}_0$ as the original input and $\hat{\boldsymbol{\delta}}_0$ as zero initial error.
For the first hidden layer ($k=1$), we initialize tight linear bounds before activation functions:
\begin{align*}
\mathbf{A}^l_1 = \mathbf{A}^u_1 &= \mathbf{W}_1, & \mathbf{b}^l_1 = \mathbf{b}^u_1 &= \mathbf{b}_1 \\
\mathbf{C}^l_1 = \mathbf{C}^u_1 &= \mathbf{W}_1 - \mathbf{W}'_1, & \mathbf{d}^l_1 = \mathbf{d}^u_1 &= \mathbf{b}_1 - \mathbf{b}'_1
\end{align*}
where $\mathbf{W}_1, \mathbf{W}'_1 \in \mathbb{R}^{n_1 \times n_0}$ denote the weights of the first layer in $f$ and $f'$, $\mathbf{b}_1,\mathbf{b}'_1 \in \mathbb{R}^{n_1}$ denote the bias of the first layer in $f$ and $f'$, and $n_0 = n_0 = d_{\text{in}}$.

\noindent\textbf{Error Propagation. }
For each network layer $k \in \{1,\ldots,L\}$, the propagation process comprises two operations: linear transformation and ReLU activation function application. The linear transformation is mathematically expressed as:
\begin{align*}
        \mathbf{x}_k &= \mathbf{\mathbf{W}_k\hat{x}}_{k-1} + \mathbf{b}_k \\
        \boldsymbol{\delta}_k &= \mathbf{W}^\Delta_k\mathbf{\hat{x}}_{k-1} + \mathbf{W}'_k\boldsymbol{\hat{\delta}}_{k-1} + (\mathbf{b}_k - \mathbf{b}'_k),
\end{align*}
where $\mathbf{W}^\Delta_k = \mathbf{W}_k - \mathbf{W}'_k$ represents the weight deviation between networks.
The linear bounding coefficients are then updated through matrix compositions:
\begin{align*}
\mathbf{A}^l_k &= \mathbf{W}_k \hat{\mathbf{A}}^l_{k-1} ,\quad\mathbf{b}^l_k = \mathbf{W}_k \hat{\mathbf{b}}^l_{k-1} + \mathbf{b}_k \\
\mathbf{A}^u_k &= \mathbf{W}_k \hat{\mathbf{A}}^u_{k-1} ,\quad \mathbf{b}^u_k = \mathbf{W}_k \hat{\mathbf{b}}^u_{k-1} + \mathbf{b}_k \\
\mathbf{C}^l_k &= \mathbf{W}^\Delta_k \hat{\mathbf{A}}^l_{k-1} + \mathbf{W}'_k \hat{\mathbf{C}}^l_{k-1}\\
\mathbf{d}^l_k &= \mathbf{W}^\Delta_k \hat{\mathbf{b}}^l_{k-1} + \mathbf{W}'_k \hat{\mathbf{d}}^l_{k-1} + (\mathbf{b}_k - \mathbf{b}'_k)  \\
\mathbf{C}^u_k &= \mathbf{W}^\Delta_k \hat{\mathbf{A}}^u_{k-1} + \mathbf{W}'_k \hat{\mathbf{C}}^u_{k-1}\\
\mathbf{d}^u_k &= \mathbf{W}^\Delta_k \hat{\mathbf{b}}^u_{k-1} + \mathbf{W}'_k \hat{\mathbf{d}}^u_{k-1} + (\mathbf{b}_k - \mathbf{b}'_k)
\end{align*}
This updates mechanism propagates bounding expressions through linear layers via direct matrix multiplications. It maintains linear relationships between network inputs and intermediate values, and thus prepares bounds for subsequent ReLU refinement in the next processing step.

Subsequently, the signal propagation through the ReLU activation function is considered:
\begin{align*}
        \mathbf{\hat{x}}_k &= \text{ReLU}(\mathbf{x}_k)\\
        \boldsymbol{\hat{\delta}}_k &= \text{ReLU}(\mathbf{x}_k)-\text{ReLU}(\mathbf{x}_k-\boldsymbol{\delta}_k)
    \end{align*}

The exact representation of ReLU activation functions would lead to exponential complexity. To facilitate the effective transformation of symbolic propagation mechanisms into tractable optimization problems, we adopt a systematic framework that converts neural network differential computations into verifiable constraints through convex relaxation. It is worth noting that the final error $\delta$ depends on both the tightness of the original network's normal vectors and the relaxation quality of accumulated error vectors.
Therefore, this study applies state-of-the-art convex relaxation methods~\cite{neurodiff,Zhang2018EfficientNN} to both types of vectors, achieving an optimized balance between precision and computational efficiency. Notably, computing the relaxation for both types of vectors requires their pre-activation bounds.
Given the input range $[\mathbf{l}, \mathbf{u}]$ for $\mathbf{x}$, we efficiently compute the pre-activation bounds (in layer $k$) from the inequalities $\mathbf{A}^l_k \mathbf{x}+\mathbf{b}^l_k\leq \mathbf{x}_k\leq \mathbf{A}^u_k \mathbf{x}+\mathbf{b}^u_k$ and $\mathbf{C}^l_k \mathbf{x}+\mathbf{d}^l_k\leq \boldsymbol{\delta}_k\leq \mathbf{C}^u_k \mathbf{x}+\mathbf{d}^u_k$:
\begin{align*}
    \mathbf l_{k} = {\mathbf{A}^l_k}^+\mathbf l+{\mathbf{A}^l_k}^-\mathbf u + \mathbf{b}^l_k\\
     \mathbf u_{k} = {\mathbf{A}^u_k}^+\mathbf u+{\mathbf{A}^u_k}^-\mathbf l + \mathbf{b}^u_k\\
     \underline{\mathbf d^l_{k}} = {\mathbf{C}^l_k}^+\mathbf l+{\mathbf{C}^l_k}^-\mathbf u + \mathbf{d}^l_k\\
     \overline{\mathbf d^l_{k}} = {\mathbf{C}^l_k}^+\mathbf u+{\mathbf{C}^l_k}^-\mathbf l + \mathbf{d}^u_k \label{eq:dlu}\\
     \underline{\mathbf d^u_{k}} = {\mathbf{C}^u_k}^+\mathbf l+{\mathbf{C}^u_k}^-\mathbf u + \mathbf{d}^u_k\\
     \overline{\mathbf d^u_{k}} = {\mathbf{C}^u_k}^+\mathbf u+{\mathbf{C}^u_k}^-\mathbf l + \mathbf{d}^u_k
\end{align*}
Here, $^+$ and $^-$ denote the positive and negative components of the matrices respectively.
The relaxation operates per-neuron, with $l_{k,i}$, $u_{k,i}$ representing pre-activation bounds, and $\underline{d^u_{k,i}}$, $\overline{d^u_{k,i}}$, $\underline{d^l_{k,i}}$, $\overline{d^l_{k,i}}$ denoting pre-activation error bounds for neuron $i$ in layer $k$.
We now construct \(m^l_{k,i},\ m^u_{k,i},\ n^u_{k,i},\ n^l_{k,i},\ p^l_{k,i},\ p^u_{k,i},\ q^u_{k,i},\ q^l_{k,i}\in \mathbb R\), which is a set of linear coefficients for neuron $i$ in layer $k$, for both $\hat{x}_{k,i}$ and $\hat{\delta}_{k,i}$ such that:
\begin{align*}
    m^l_{k,i}x_{k,i}+p^l_{k,i}&\leq\hat{x}_{k,i}\leq m^u_{k,i}x_{k,i}+p^u_{k,i}\\
    n^l_{k,i}\delta_{k,i}+q^l_{k,i}&\leq\hat{\delta}_{k,i}\leq n^u_{k,i}\delta_{k,i}+q^u_{k,i}
\end{align*}
The computation of these coefficient is detailed in existing works~\cite{neurodiff}.
Subsequently, we construct diagonal matrices $\mathbf{M}^u_k$, $\mathbf{M}^l_k$, $\mathbf{N}^u_k$, $\mathbf{N}^l_k$ by aggregating across neurons:
\begin{align*}
\mathbf M^u_k[i,j] &= \begin{cases}
m^u_{k,i} & \text{if } i=j \\
0 & \text{otherwise}
\end{cases} \\
\mathbf M^l_k[i,j] &= \begin{cases}
m^l_{k,i} & \text{if } i=j \\
0 & \text{otherwise}
\end{cases} \\
\mathbf N^u_k[i,j] &= \begin{cases}
n^u_{k,i} & \text{if } i=j \\
0 & \text{otherwise}
\end{cases} \\
\mathbf N^l_k[i,j] &= \begin{cases}
n^l_{k,i} & \text{if } i=j \\
0 & \text{otherwise}
\end{cases}
\end{align*}
and vectors $\mathbf{p}^u_k$, $\mathbf{p}^l_k$, $\mathbf{q}^u_k$, $\mathbf{q}^l_k$:
\begin{align*}
\mathbf p^u_k[i] = p^u_{k,i},
\mathbf p^l_k[i] = p^l_{k,i},
\mathbf q^u_k[i] = q^u_{k,i},
\mathbf q^l_k[i] = q^l_{k,i}
\end{align*}
This formulation yields the following bounding relationships:
\begin{align*}
    \mathbf M^l_k\mathbf{x}_k+\mathbf p^l_k\leq \hat{\mathbf{x}}_k\leq \mathbf M^u_k\mathbf{x}_k+\mathbf p^u_k\\
    \mathbf N^l_k\boldsymbol{\delta}_k+\mathbf q^l_k\leq \hat{\boldsymbol{\delta}}_k\leq \mathbf N^u_k\boldsymbol{\delta}_k+\mathbf q^u_k
\end{align*}

Since all elements in $\mathbf{M}^l_k$, $\mathbf{M}^u_k$, $\mathbf{N}^l_k$, $\mathbf{N}^u_k$ are non-negative (as established by~\cite{neurodiff,Zhang2018EfficientNN}), we can combine these inequalities to obtain:
\begin{align*}
\mathbf{M}^l_k(\mathbf{A}^l_{k} \mathbf{x} + \mathbf{b}^l_{k}) + \mathbf{p}^l_k &\leq \hat{\mathbf{x}}_k \leq \mathbf{M}^u_k(\mathbf{A}^u_{k} \mathbf{x} + \mathbf{b}^u_{k}) + \mathbf{p}^u_k \\
\mathbf{N}^l_k(\mathbf{C}^l_{k} \mathbf{x} + \mathbf{d}^l_{k}) + \mathbf{q}^l_k &\leq \hat{\boldsymbol{\delta}}_k \leq \mathbf{N}^u_k(\mathbf{C}^u_{k} \mathbf{x} + \mathbf{d}^u_{k}) + \mathbf{q}^u_k
\end{align*}
Through this algorithmic framework, we ultimately obtain the final error bounds at the output layer $L$.

\noindent\textbf{Justification for Dual-Network Analysis.}
While the similarity $|f(x) - f'(x)|$ could theoretically be analyzed as a single neural network $g(x) = f(x) - f'(x)$ using standard PAC-Bayes bounds or other naive single-network analysis, such approaches fundamentally ignore the strong structural correlation between $f$ and $f'$. This leads to overly pessimistic certification bounds on the deterministic error range. We justify it with the following Remark.
\begin{remark}[Structural Correlation vs. Naive Analysis]
    In compression scenarios, $W_k \approx W'_k$, meaning the weight difference $W^\Delta_k$ is often sparse or has small norms. A naive single-network analysis of $g(x)$ treats the layers as independent, which leads to loose bounds due to the accumulation of worst-case deviations (the "wrapping effect"). Our dual-network propagation (Section 4.1) linearly depends on $W^\Delta$, effectively exploiting the structural similarity to cancel out common signals and significantly tighten the range of the output error variable $\boldsymbol{\delta}_L$.
\end{remark}

\noindent\textbf{Extension to Weight Pruning via Zero-Padding. }
In weight pruning, preserved weights remain unchanged (unlike quantization-induced parameter perturbations). We optimize our error propagation model by transforming pruning elimination into parametric correction. Let $\mathcal{P}_k$ denote the set of pruned neuron indices at layer $k$ in the pruned network $f'$. For any neuron $n'_{k,i}$ in $f'$ where $i\in \mathcal{P}_k$, the deterministic relationship $\hat{x}'_{k-1,i}W'_k[i,\cdot]=0$ holds. For simplicity, we enforce $\hat{x}'_{k-1,i}=0$ while maintaining original weights $\mathbf{W'}_k$. This configuration simplifies linear error accumulation because $\mathbf{W}^\Delta_k$ and $\mathbf{b}_k-\mathbf{b}'_k$ become zero vectors by construction. Algorithm~\ref{alg:alignment} details the zero-padding alignment process for heterogeneous networks.
\begin{algorithm}[htbp]
\caption{Zero-Padding Alignment for Heterogeneous Networks}
\label{alg:alignment}
\KwIn{Original network $f$, Pruned network $f'$}
\KwOut{Aligned pruned network $\hat{f}'$}
Initialize $\hat{f}' \leftarrow f'$ \
\For{each layer $k\in{1,...,L}$}{
Obtain pruning index set $\mathcal{P}_k$ \
\If{$\mathcal{P}_k \neq \emptyset$}{
\ForEach{neuron $i \in \mathcal{P}_k$}{
Set $\hat{x}'_{k-1,i} \leftarrow 0$\tcc*[r]{Zero-pad pruned activations}
}
Maintain original weights: $\mathbf{W}'_k \leftarrow \mathbf{W}_k$ \
Maintain original biases: $\mathbf{b}'_k \leftarrow \mathbf{b}_k$ \
Set error parameters: $\mathbf{W}^\Delta_k \leftarrow \mathbf{0}$, $\mathbf{b}^\Delta_k \leftarrow \mathbf{0}$ \
}
}
\Return aligned network $\hat{f}'$;
\end{algorithm}
Error accumulation follows this linear transformation:
\begin{align*}
\boldsymbol{\delta}_k &= \boldsymbol{\hat{\delta}}_{k-1}\mathbf{W}_k
\end{align*}
For each neuron index $i$, post-ReLU error accumulation is:
\begin{align*}
\hat{\delta}_{k,i} &= \begin{cases}
\text{ReLU}(x_{k,i})& \text{if } i\in\mathcal{P}_k \\
\text{ReLU}(x_{k,i}) - \text{ReLU}(x_{k,i} - \delta_{k,i})& \text{otherwise}
\end{cases}
\end{align*}

The dual-network symbolic propagation presented in this subsection provides a linear envelope $[\delta^L, \delta^U]$ over the non-linear similarity difference $||f(x) - f'(x)||_{\infty}$. This linearization is critical: by transforming the network's complex non-linear behavior into a bounded, affine function, it enables the subsequent statistical analysis. We are now equipped to leverage this linear over-approximation to calculate the first-order (mean) and second-order (variance) moments, which are necessary inputs for deriving the variance-aware probabilistic safety guarantees using probabilistic inequality, as detailed next.
\subsection{Variance-Aware Probability Bound Computation}\label{probabilistic bound computation}

\noindent\textbf{Adapting PROVEN's Theoretical Framework}
Building on PROVEN~\cite{PROVEN}, we compute certifiably similar regions with probabilistic guarantees. This framework provides essential tools for handling unverified regions through linear output bounds. Specifically, for an unverified region $\mathcal{R}$, we obtain upper and lower bounds $g^U(\mathbf{x})$ and $g^L(\mathbf{x})$ satisfying:
\[
\forall \mathbf{x}\in \mathcal{R}, \quad g^L(\mathbf{x}) \leq g(\mathbf{x}) \leq g^U(\mathbf{x})
\]
These deterministic bounds yield probability guarantees via cumulative distribution functions (CDFs). For input random variables $\mathbf{X} \sim \mathcal{D}$:
\[
1 - F_{g^L}(\epsilon) \leq \mathbb{P}(g(\mathbf{X}) > \epsilon) \leq 1 - F_{g^U}(\epsilon)
\]
where $F_{g^U}$ and $F_{g^L}$ denote the CDF of $g^U(\mathbf{X})$ and $g^L(\mathbf{X})$ respectively.

For $\epsilon$-$\gamma$ similarity verification, we define the error $\delta(\mathbf{x}) = f(\mathbf{x}) - f'(\mathbf{x})$ and decompose the symmetric probability:
\begin{align*}
\mathbb{P}(|\delta(\mathbf{X})| \leq \epsilon) &= \mathbb{P}(\delta(\mathbf{X}) \leq \epsilon) + \mathbb{P}(-\delta(\mathbf{X}) \leq \epsilon) - 1
\end{align*}

Using linear relaxation bounds $\delta^L(\mathbf{x}) \leq \delta(\mathbf{x}) \leq \delta^U(\mathbf{x})$, we obtain the overall lower ($\gamma_{\min}$) and upper ($\gamma_{\max}$) probability bounds:
\begin{align}
\mathbb{P}(|\delta(\mathbf{X})| \leq \epsilon) &\geq F_{\delta^U}(\epsilon) + F_{-\delta^L}(\epsilon) - 1 \label{eq:prob_lower_bound} \\
\mathbb{P}(|\delta(\mathbf{X})| \leq \epsilon) &\leq F_{\delta^L}(\epsilon) + F_{-\delta^U}(\epsilon) - 1 \label{eq:prob_upper_bound}
\end{align}
The linear relaxations are defined as: $\delta^L(\mathbf{x}) = \mathbf{C}^L\mathbf{x} + \mathbf d^L$ and $\delta^U(\mathbf{x}) = \mathbf{C}^U\mathbf{x} + \mathbf d^U$.

To apply concentration inequalities to these linear bounds, we first compute the necessary statistical moments. Assuming independent and bounded input variables $\mathbf{X}$ with $X_i \in [l_i, u_i]$, the expected values (means) are:
\begin{align}
\mu^L &= \mathbb{E}[\delta^L(\mathbf{X})] = \sum^n_{i=1}C^L_{\cdot, i} \cdot \frac{l_i+u_i}{2} + \mathbf d^L \label{eq:mean_L} \\
\mu^U &= \mathbb{E}[\delta^U(\mathbf{X})] = \sum^n_{i=1}C^U_{\cdot, i} \cdot \frac{l_i+u_i}{2} + \mathbf d^U \label{eq:mean_U}
\end{align}
We define the maximum input range length $K = \max_{i\in[n]}(u_i-l_i)$.

\bigskip

\noindent\textbf{Hoeffding's Inequality Based Bounds}

We first leverage Hoeffding's inequality, which forms the basis of the existing probabilistic robustness frameworks, such as PROVEN.

\begin{lemma}[Hoeffding's Inequality]\label{lem:hoeffding_inequality}
Let $Z$ be a bounded random variable with range length $K$. The tail probability is bounded by
$$\mathbb{P}(Z - \mathbb{E}[Z] \ge t) \le \exp\left(-\frac{2t^2}{K^2}\right).$$
\end{lemma}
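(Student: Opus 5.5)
The plan is the standard Chernoff-bound argument combined with Hoeffding's lemma on the moment generating function. Write $Y = Z - \mathbb{E}[Z]$. Then $\mathbb{E}[Y]=0$ and $Y$ takes values in an interval $[a,b]$ with $b-a = K$; this interval is the range of $Z$ shifted by $-\mathbb{E}[Z]$, so it contains $0$. For any $s>0$, Markov's inequality applied to the nonnegative variable $e^{sY}$ gives
\[
\mathbb{P}(Y \ge t) \le e^{-st}\,\mathbb{E}[e^{sY}].
\]
So everything reduces to bounding the moment generating function of a centered bounded variable.

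The key step, which I expect to be the only real obstacle, is Hoeffding's lemma: $\mathbb{E}[e^{sY}] \le \exp(s^2K^2/8)$. I would prove it as follows.

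\begin{itemize}
\item By convexity of $y\mapsto e^{sy}$ on $[a,b]$,
\[
e^{sy} \le \frac{b-y}{b-a}e^{sa} + \frac{y-a}{b-a}e^{sb}.
\]
\item Taking expectations and using $\mathbb{E}[Y]=0$ yields $\mathbb{E}[e^{sY}] \le e^{\varphi(h)}$. Here $h = sK$, $p = -a/K \in [0,1]$, and
\[
\varphi(h) = -hp + \log\bigl(1-p+pe^{h}\bigr).
\]
\item One checks $\varphi(0)=0$ and $\varphi'(0)=0$.
\item For the second derivative, $\varphi''(h) = \theta(1-\theta)$, where $\theta = pe^h/(1-p+pe^h)\in[0,1]$. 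Hence $\varphi''(h)\le 1/4$.
\item Taylor's theorem with Lagrange remainder then gives $\varphi(h)\le h^2/8 = s^2K^2/8$.
\end{itemize}

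The care needed here is in the convexity reduction and the second-derivative computation. The remaining steps are routine.

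Combining the two bounds gives $\mathbb{P}(Y\ge t)\le \exp(-st + s^2K^2/8)$ for every $s>0$. Minimizing the exponent over $s$ gives $s = 4t/K^2$, and substituting yields $\exp(-2t^2/K^2)$, which is the claimed bound. For $t\le 0$ the right-hand side is at least $\exp(0)$ only when $t=0$, so we restrict to $t>0$ implicitly; at $t=0$ the bound equals $1$ and is trivial.

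I would also remark on how this lemma will be used later. In the paper's application, $Z=\delta^U(\mathbf{X})$ is a sum of independent bounded terms $C_i X_i$. There the same argument applied termwise (product of MGFs) gives the multivariate form with $\sum_i C_i^2(u_i-l_i)^2$ in place of $K^2$. This sharpening is not needed for the statement as written, which treats $Z$ as a single bounded variable with range length $K$.
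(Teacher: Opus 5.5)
Your argument is correct: it is the standard Chernoff-bound-plus-Hoeffding's-lemma proof the paper has in mind when it omits the proof as a standard result, and your closing remark is apt, since Theorem~\ref{thm:hoeffding_bound_application} actually needs the independent-sum version with $\sum_i C_i^2(u_i-l_i)^2\le K^2\|\mathbf{C}\|_2^2$. One sharpening: the restriction to $t\ge 0$ is necessary, not merely implicit, because for $t\le -K$ the left-hand side equals $1$ while $\exp(-2t^2/K^2)<1$, which is consistent with the paper invoking the lemma only when $t\ge 0$.
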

(Proof is omitted as this is a standard result in concentration theory.)

Applying this inequality to our linear error models yields the baseline probability bounds required for \eqref{eq:prob_lower_bound} and \eqref{eq:prob_upper_bound}:

\begin{theorem}[Hoeffding-Based Probabilistic Bound]\label{thm:hoeffding_bound_application}
Using the Hoeffding inequality (Lemma \ref{lem:hoeffding_inequality}), the four constituent CDF components are bounded as follows:

\begin{equation}
        F_{\delta^U}(\epsilon)\geq\begin{cases}
        1-\exp\left(-\frac{2(\epsilon-\mu^U)^2}{K^2\left \| \mathbf C^U \right \|^2_2 }\right) &\text{if } \epsilon-\mu^U\geq0\\
        0 &\text{otherwise}
    \end{cases} \label{eq:hoeffding_F_deltaU}
\end{equation}
and
\begin{equation}
        F_{-\delta^L}(\epsilon)\geq\begin{cases}
        1-\exp\left(-\frac{2(\epsilon+\mu^L)^2}{K^2\left \| \mathbf C^L \right \|^2_2 }\right) &\text{if } \epsilon+\mu^L\geq0\\
        0 &\text{otherwise}
    \end{cases} \label{eq:hoeffding_F_minus_deltaL}
\end{equation}
Similarly, the upper bound estimates are:
\begin{equation}
        F_{\delta^L}(\epsilon)\leq\begin{cases}
        \exp\left(-\frac{2(\epsilon-\mu^L)^2}{K^2\left \| \mathbf C^L \right \|^2_2 }\right) &\text{if } \epsilon-\mu^L\leq0\\
        1 &\text{otherwise}
    \end{cases} \label{eq:hoeffding_F_deltaL}
\end{equation}
and
\begin{equation}
        F_{-\delta^U}(\epsilon)\leq\begin{cases}
        \exp\left(-\frac{2(\epsilon+\mu^U)^2}{K^2\left \| \mathbf C^U \right \|^2_2 }\right) &\text{if } \epsilon+\mu^U\leq0\\
        1 &\text{otherwise}
    \end{cases} \label{eq:hoeffding_F_minus_deltaU}
\end{equation}
\end{theorem}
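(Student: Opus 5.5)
The plan is to derive all four inequalities from the general (sum) form of Hoeffding's inequality applied to the linear functions $\delta^U(\mathbf{X})=\mathbf C^U\mathbf X+\mathbf d^U$ and $\delta^L(\mathbf{X})=\mathbf C^L\mathbf X+\mathbf d^L$. Here $\mathbf{X}$ has independent coordinates $X_i\in[l_i,u_i]$. Lemma~\ref{lem:hoeffding_inequality} is stated for a single bounded variable $Z$ with range length $K$, so I will use it in its standard sum form. Write $\delta^U(\mathbf X)-\mu^U=\sum_i C^U_i\,(X_i-\mathbb E X_i)$, a sum of independent terms. The $i$-th term ranges over an interval of length $|C^U_i|(u_i-l_i)\le |C^U_i|K$. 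Hoeffding for sums of independent bounded variables then gives
\[
\mathbb P\big(\delta^U(\mathbf X)-\mu^U\ge t\big)\le \exp\Big(-\frac{2t^2}{\sum_i (C^U_i)^2(u_i-l_i)^2}\Big)\le \exp\Big(-\frac{2t^2}{K^2\|\mathbf C^U\|_2^2}\Big),
\]
where the denominator is enlarged by $(u_i-l_i)\le K$. The same bound holds for the lower tail $\mathbb P(\delta^U-\mu^U\le -t)$ by applying it to $-\delta^U$, since the ranges are unchanged. The analogous statements hold for $\delta^L$ with $\mathbf C^L,\mu^L$.

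With these tail bounds, I will treat each CDF in turn.

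For \eqref{eq:hoeffding_F_deltaU}: $F_{\delta^U}(\epsilon)=1-\mathbb P(\delta^U>\epsilon)\ge 1-\mathbb P(\delta^U-\mu^U\ge \epsilon-\mu^U)$. When $t=\epsilon-\mu^U\ge0$ the upper-tail bound applies. Otherwise the trivial bound $F\ge 0$ is used, because Hoeffding is only meaningful for nonnegative deviations.

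For \eqref{eq:hoeffding_F_minus_deltaL}: $F_{-\delta^L}(\epsilon)=1-\mathbb P(-\delta^L>\epsilon)=1-\mathbb P(\delta^L-\mu^L<-(\epsilon+\mu^L))$. The lower-tail bound with $t=\epsilon+\mu^L\ge0$ applies here.

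For the upper estimates \eqref{eq:hoeffding_F_deltaL} and \eqref{eq:hoeffding_F_minus_deltaU}: $F_{\delta^L}(\epsilon)=\mathbb P(\delta^L-\mu^L\le \epsilon-\mu^L)$, which is a lower-tail event with $t=\mu^L-\epsilon\ge0$ exactly when $\epsilon-\mu^L\le 0$. Similarly $F_{-\delta^U}(\epsilon)=\mathbb P(\delta^U-\mu^U\ge -\epsilon-\mu^U)$, an upper-tail event with $t=-(\epsilon+\mu^U)\ge0$ when $\epsilon+\mu^U\le0$. Squaring $t$ removes the sign, which gives the displayed exponents. In the complementary cases the trivial bound $F\le1$ is used.

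The only real obstacle is the mismatch between Lemma~\ref{lem:hoeffding_inequality} as stated (a single variable with range $K$) and what is needed. The range of $\delta^U(\mathbf X)$ is $\sum_i|C^U_i|(u_i-l_i)$, which would give the weaker denominator $K^2\|\mathbf C^U\|_1^2$. So I must invoke the independent-sum version, which requires the assumed independence of the coordinates $X_i$; the $\ell_2$ norm then arises from summing squared per-coordinate ranges. I will state this explicitly and note that the mean formulas \eqref{eq:mean_L}--\eqref{eq:mean_U} rely only on linearity of expectation. The rest is sign bookkeeping in the four cases.
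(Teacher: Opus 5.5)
Your proof is correct, and its case analysis matches the paper's. You make the same choices of $t$ (namely $\epsilon-\mu^U$, $\epsilon+\mu^L$, $\mu^L-\epsilon$, $-(\epsilon+\mu^U)$). You handle lower tails by negating or mirroring the variable, as the paper does with $Z'=\mu^L-\delta^L(\mathbf X)$. You use the same trivial bounds $0$ and $1$ in the complementary cases.

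The genuine difference is where the denominator $K^2\|\mathbf C\|_2^2$ comes from. The paper invokes the single-variable Lemma~\ref{lem:hoeffding_inequality} and simply asserts that ``the squared range $K^2$ is replaced by $K^2\|\mathbf C\|_2^2$.'' As you point out, applying that lemma literally to $Z=\delta^U(\mathbf X)$ gives only range length $\sum_i |C^U_i|(u_i-l_i)\le K\|\mathbf C^U\|_1$, hence the weaker denominator $K^2\|\mathbf C^U\|_1^2$. Your appeal to the independent-sum form of Hoeffding, with per-coordinate ranges $|C_i|(u_i-l_i)$, is what actually justifies the $\ell_2$ norm. It uses the independence assumption the paper states just before \eqref{eq:mean_L}, and it yields the sharper intermediate denominator $\sum_i (C_i)^2(u_i-l_i)^2$. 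So your route supplies the justification at the one step the paper leaves unexplained.

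One small correction to your side remark: the midpoint formulas \eqref{eq:mean_L}--\eqref{eq:mean_U} rely not only on linearity of expectation but also on $\mathbb E[X_i]=(l_i+u_i)/2$, which holds for uniform inputs, for example. Your argument is sound provided $\mu^U,\mu^L$ are the true means, since Hoeffding is centered there.
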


\begin{proof}
The proof involves applying the right-tail and left-tail bounds of Lemma \ref{lem:hoeffding_inequality} to the linear bounds $\delta^U(\mathbf{X})$ and $\delta^L(\mathbf{X})$. The squared range $K^2$ in the denominator is replaced by $K^2\left \| \mathbf C \right \|^2_2$.

\noindent\textbf{Derivation of Lower Bound $F_{\delta^U}(\epsilon)$ (Eq. \ref{eq:hoeffding_F_deltaU}):}
We bound the complementary probability $\mathbb{P}(\delta^U(\mathbf{X}) > \epsilon)$. Applying the right-tail bound of Lemma \ref{lem:hoeffding_inequality} with $Z=\delta^U(\mathbf{X})$ and $t = \epsilon - \mu^U$. If $t \ge 0$:
$$
F_{\delta^U}(\epsilon) = 1 - \mathbb{P}(\delta^U(\mathbf{X}) > \epsilon) \ge 1 - \exp\left(-\frac{2(\epsilon-\mu^U)^2}{K^2\left \| \mathbf C^U \right \|^2_2}\right)
$$
The derivation for $F_{-\delta^L}(\epsilon)$ (Eq. \ref{eq:hoeffding_F_minus_deltaL}) follows analogously using $\mathbb{E}[-\delta^L(\mathbf{X})] = -\mu^L$.

\noindent\textbf{Derivation of Upper Bound $F_{\delta^L}(\epsilon)$ (Eq. \ref{eq:hoeffding_F_deltaL}):}
We bound the left tail $F_{\delta^L}(\epsilon) = \mathbb{P}(\delta^L(\mathbf{X}) \le \epsilon)$. This is equivalent to applying the right-tail bound of Lemma \ref{lem:hoeffding_inequality} to the mirrored variable $Z' = \mu^L - \delta^L(\mathbf{X})$, with $\mathbb{E}[Z'] = 0$. We set $t = \mu^L - \epsilon$. If $t \ge 0$:
$$
F_{\delta^L}(\epsilon) = \mathbb{P}(Z' \ge t) \le \exp\left(-\frac{2(\mu^L-\epsilon)^2}{K^2\left \| \mathbf C^L \right \|^2_2}\right)
$$
The derivation for $F_{-\delta^U}(\epsilon)$ (Eq. \ref{eq:hoeffding_F_minus_deltaU}) follows analogously using $\mathbb{E}[-\delta^U(\mathbf{X})] = -\mu^U$.
\end{proof}

\bigskip

\noindent\textbf{Variance-Aware Bounds via Bernstein's Inequality}

The Hoeffding bound is variance-agnostic. Since the error in network similarity ($\delta$) is expected to exhibit minimal variance due to the high correlation between $f$ and $f'$, we introduce Bernstein's inequality to achieve tighter bounds.

\begin{lemma}[Bernstein's Inequality]\label{lem:bernstein_inequality}
Let $Z$ be a bounded random variable with variance $\sigma^2$ and maximum deviation $M = \max|Z - \mathbb{E}[Z]|$. The tail probability is bounded by
$$\mathbb{P}(Z - \mathbb{E}[Z] \ge t) \le \exp\left(-\frac{t^2}{2\sigma^2 + 2Mt/3}\right).$$
\end{lemma}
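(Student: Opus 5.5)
Let $Y = Z - \mathbb{E}[Z]$. Then $\mathbb{E}[Y]=0$, $\mathbb{E}[Y^2]=\sigma^2$ and $|Y|\le M$ almost surely. The plan is the standard Chernoff argument. For any $\lambda>0$, Markov's inequality applied to $e^{\lambda Y}$ gives
\[
\mathbb{P}(Y\ge t)\le e^{-\lambda t}\,\mathbb{E}[e^{\lambda Y}].
\]
The work is then to bound the moment generating function using only the variance and the almost-sure bound $M$, and afterwards to optimize over $\lambda$.

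\textbf{Step 1: bound the MGF.} Expand $e^{\lambda Y}=1+\lambda Y+\sum_{k\ge2}\lambda^k Y^k/k!$ and take expectations. The linear term vanishes. For $k\ge 2$ use $|Y^k|\le Y^2 M^{k-2}$, which gives
\[
\mathbb{E}[e^{\lambda Y}]\le 1+\sigma^2\sum_{k\ge2}\frac{\lambda^k M^{k-2}}{k!}.
\]
Next use $k!\ge 2\cdot 3^{k-2}$ for $k\ge2$ (an easy induction). This bounds the series by a geometric sum:
\[
\sum_{k\ge2}\frac{\lambda^k M^{k-2}}{k!}\le \frac{\lambda^2}{2}\sum_{j\ge0}\Bigl(\frac{\lambda M}{3}\Bigr)^j=\frac{\lambda^2}{2(1-\lambda M/3)},\qquad 0<\lambda<3/M.
\]
Finally $1+x\le e^x$ yields
\[
\mathbb{E}[e^{\lambda Y}]\le \exp\Bigl(\frac{\lambda^2\sigma^2}{2(1-\lambda M/3)}\Bigr).
\]

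\textbf{Step 2: choose $\lambda$.} Rather than optimizing exactly, I would take
\[
\lambda=\frac{t}{\sigma^2+Mt/3}.
\]
This choice automatically satisfies $\lambda M/3<1$. Indeed,
\[
1-\frac{\lambda M}{3}=\frac{\sigma^2}{\sigma^2+Mt/3},
\]
so the exponent becomes
\[
-\lambda t+\frac{\lambda^2\sigma^2}{2(1-\lambda M/3)}=-\frac{t^2}{\sigma^2+Mt/3}+\frac{t^2}{2(\sigma^2+Mt/3)}=-\frac{t^2}{2(\sigma^2+Mt/3)}=-\frac{t^2}{2\sigma^2+2Mt/3}.
\]
This is exactly the claimed bound.

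\textbf{Edge cases and main obstacle.} The delicate points are the combinatorial estimate $k!\ge 2\cdot3^{k-2}$ and checking that the chosen $\lambda$ lies in the convergence range; both are routine. Two degenerate cases need a brief remark. If $\sigma^2=0$, then $Y=0$ almost surely and the bound holds trivially for $t>0$. For $t\le 0$ the right-hand side is at least $e^0$-type trivial in spirit, i.e.\ the statement is meant for $t\ge0$ and is vacuous otherwise. The bound is then applied to $Z=\delta^U(\mathbf{X})$ and the analogous mirrored variables, exactly as in the proof of Theorem~\ref{thm:hoeffding_bound_application}.
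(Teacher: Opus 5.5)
Your argument is correct and is the standard Chernoff/moment-generating-function proof of Bernstein's inequality: the moment bound $|Y|^k\le M^{k-2}Y^2$, the estimate $k!\ge 2\cdot 3^{k-2}$, and the choice $\lambda=t/(\sigma^2+Mt/3)$, which keeps $\lambda M/3<1$ whenever $\sigma^2>0$. This is exactly the textbook result the paper invokes while omitting its proof.

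One correction to your edge-case remark: for $t<0$ the statement is not vacuous but can be false. For example, take $Z-\mathbb{E}[Z]\in\{-0.1,\,0.9\}$ with probabilities $0.9,\,0.1$ and $t=-0.2$; the left side is $1$ and the right side is $e^{-2/3}$. So the restriction $t\ge 0$, which is the only regime where the paper applies the lemma, is genuinely necessary rather than merely conventional.
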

(Proof is omitted as this is a standard result in concentration theory.)

\begin{proposition}[Tightness Condition]\label{prop:tightness}
The Bernstein bound (Lemma \ref{lem:bernstein_inequality}) is strictly tighter than the Hoeffding bound (Lemma \ref{lem:hoeffding_inequality}) when the variance satisfies $\sigma^2 < \frac{K^2}{4} - \frac{Mt}{3}$.
\end{proposition}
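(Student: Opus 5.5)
We compare the two exponents directly, since $\exp(-a) < \exp(-b)$ if and only if $a > b$. Fix a deviation level $t>0$; for $t\le 0$ both bounds are trivial. Lemma~\ref{lem:bernstein_inequality} is strictly tighter than Lemma~\ref{lem:hoeffding_inequality} exactly when
\[
\frac{t^2}{2\sigma^2 + 2Mt/3} > \frac{2t^2}{K^2}.
\]
Two facts make the manipulation legitimate. First, both denominators are positive: $K^2>0$ for a non-degenerate variable, and $2\sigma^2+2Mt/3>0$ as soon as $Z$ is not almost surely constant. Second, $t^2>0$, so we may divide by $t^2$ without changing the direction of the inequality. This reduces the condition to
\[
\frac{1}{2\sigma^2 + 2Mt/3} > \frac{2}{K^2}.
\]
Cross-multiplying by the positive denominators gives $K^2 > 4\sigma^2 + 4Mt/3$, which is $\sigma^2 < K^2/4 - Mt/3$. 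Each step is an equivalence, so the stated variance condition implies the Bernstein exponent is strictly larger. Monotonicity of $\exp$ then yields a strictly smaller tail bound, which is the claim.

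It is worth checking that the statement is not vacuous. Popoviciu's inequality gives $\sigma^2 \le K^2/4$ for a variable with range length $K$. The condition therefore asks for the variance to sit below this maximal value by a margin $Mt/3$. This fits the low-variance regime the paper motivates and can be stated as a side remark.

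The main obstacle is bookkeeping rather than analysis. The statement is phrased for a generic $Z$, but in the application $Z=\delta^U(\mathbf{X})$ is linear, and Theorem~\ref{thm:hoeffding_bound_application} replaces $K^2$ by the effective range $K^2\|\mathbf C\|_2^2$. I would state the proposition for the abstract $Z$ with its own range length $K$, as in Lemma~\ref{lem:hoeffding_inequality}. I would then add a note that the same computation applies verbatim with $K$ replaced by the effective range, and with $\sigma^2$ and $M$ computed for $\delta^U(\mathbf{X})$ or $-\delta^L(\mathbf{X})$.

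I would also handle the degenerate case $\sigma^2=0$ with $M=0$ separately. There Bernstein's exponent is $+\infty$, the bound is $0$, and it is trivially tighter whenever $K>0$.
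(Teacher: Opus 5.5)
Your argument is correct and matches the paper's: both compare the Bernstein and Hoeffding exponents directly and rearrange $2\sigma^2 + \tfrac{2}{3}Mt < \tfrac{K^2}{2}$ into $\sigma^2 < \tfrac{K^2}{4} - \tfrac{Mt}{3}$. Your version adds positivity and $t>0$ bookkeeping and avoids the paper's garbled intermediate $\frac{2t^2}{t^2/K^2}$, which should read $\frac{t^2}{2t^2/K^2}$; one wording fix is that for $t<0$ the two lemmas do not apply at all, rather than being trivial, so just restrict to $t>0$ as the lemmas implicitly do.
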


\begin{proof}
The result is obtained by direct comparison of the exponents in the two inequalities. The Bernstein bound is tighter if its denominator is smaller than that of the Hoeffding bound: $2\sigma^2 + \frac{2}{3} Mt < \frac{2t^2}{t^2/K^2} = \frac{K^2}{2}$. Rearrangement yields the required condition: $\sigma^2 < \frac{K^2}{4} - \frac{1}{3} Mt$.
\end{proof}

This proposition reveals the precise conditions under which the variance-aware Bernstein inequality provides tighter bounds than the variance-agnostic Hoeffding inequality. The effectiveness of the Bernstein bound hinges on three factors:

\begin{itemize}
    \item \textbf{Low Variance ($\sigma^2$):} The most critical factor. If the error variance ($\sigma^2$) is small—a common scenario in similarity certification where the linear bounds $\delta^L$ and $\delta^U$ closely approximate the true error $\delta$—the Bernstein bound's denominator ($2\sigma^2 + \frac{2}{3} Mt$) is significantly reduced compared to Hoeffding's, leading to a tighter probability guarantee.
    \item \textbf{Tight Linear Relaxation ($M$):} $M$ represents the maximum deviation of the linear bound from its mean. A smaller $M$ (indicating a higher quality, tighter linear relaxation) minimizes the correctional term $\frac{2}{3} Mt$, further reducing the Bernstein denominator and enhancing tightness.
    \item \textbf{Small Tail Distance ($t$):} $t$ is the distance from the mean to the confidence threshold ($\epsilon-\mu$). When certifying probabilities close to the mean (small $t$), the $M$ term becomes negligible, and the bound is dominated by the variance term ($2\sigma^2$), where Bernstein achieves its maximum advantage over Hoeffding.
\end{itemize}

In summary, the Bernstein bound is particularly well-suited for similarity verification because the inherent low variance of the linear error bounds, combined with a potentially small target $\epsilon$ (i.e., small $t$), ensures the condition $\sigma^2 < \frac{K^2}{4} - \frac{Mt}{3}$ is frequently met, yielding superior certification guarantees.

\begin{theorem}[Variance-Aware Probabilistic Bound via Bernstein's Inequality]\label{thm:bernstein_bound_application}
Leveraging the variance sensitivity of Bernstein's inequality (Lemma \ref{lem:bernstein_inequality}), we derive the following tighter bounds. Here, $\text{Var}(\delta^U)$ is the variance, and $K_U$ is the maximum deviation of $\delta^U(\mathbf{X})$.

\begin{equation}
F_{\delta^U}(\epsilon)\geq\begin{cases}
1-\exp\left( -\frac{(\epsilon - \mu^U)^2}{2\text{Var}(\delta^U) + \frac{2}{3} K_U (\epsilon - \mu^U)} \right) &\text{if } \epsilon-\mu^U\geq0\\
0 &\text{otherwise}
\end{cases} \label{eq:bernstein_F_deltaU}
\end{equation}
and
\begin{equation}
F_{-\delta^L}(\epsilon)\geq\begin{cases}
1-\exp\left( -\frac{(\epsilon + \mu^L)^2}{2\text{Var}(-\delta^L) + \frac{2}{3} K_{-L} (\epsilon + \mu^L)} \right) &\text{if } \epsilon+\mu^L\geq0\\
0 &\text{otherwise}
\end{cases} \label{eq:bernstein_F_minus_deltaL}
\end{equation}
Similarly, the upper bound estimates are:
\begin{equation}
F_{\delta^L}(\epsilon)\leq\begin{cases}
\exp\left( -\frac{(\epsilon - \mu^L)^2}{2\text{Var}(\delta^L) + \frac{2}{3} K_L (\mu^L - \epsilon)} \right) &\text{if } \epsilon-\mu^L\leq0\\
1 &\text{otherwise}
\end{cases} \label{eq:bernstein_F_deltaL}
\end{equation}
and
\begin{equation}
F_{-\delta^U}(\epsilon)\leq\begin{cases}
\exp\left( -\frac{(\epsilon + \mu^U)^2}{2\text{Var}(-\delta^U) + \frac{2}{3} K_{U} (-\mu^U - \epsilon)} \right) &\text{if } \epsilon+\mu^U\leq0\\
1 &\text{otherwise}
\end{cases} \label{eq:bernstein_F_minus_deltaU}
\end{equation}
where $K_{-L}$, $K_U$ are the maximum deviations of the respective linear functions, and $\text{Var}(-\delta) = \text{Var}(\delta)$.
\end{theorem}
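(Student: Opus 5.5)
The plan mirrors the proof of Theorem~\ref{thm:hoeffding_bound_application}, with Lemma~\ref{lem:hoeffding_inequality} replaced by Lemma~\ref{lem:bernstein_inequality}. Each of the four CDF components becomes a single tail event for one of the scalar random variables $Z\in\{\delta^U(\mathbf X),\,-\delta^L(\mathbf X),\,\mu^L-\delta^L(\mathbf X),\,-\mu^U+\delta^U(\mathbf X)\}$.

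Each such $Z$ is an affine function of the independent, bounded inputs $X_i\in[l_i,u_i]$. Hence $Z$ is bounded, its mean is given by \eqref{eq:mean_L}--\eqref{eq:mean_U}, and its variance and maximal deviation are finite and computable in closed form. By independence,
\[
\text{Var}(\delta^U)=\sum_i (C^U_{\cdot,i})^2\,\frac{(u_i-l_i)^2}{12}
\]
under the uniform distribution. The maximal deviation is $K_U=\sum_i |C^U_{\cdot,i}|\,\frac{u_i-l_i}{2}$, and analogous expressions hold for $\delta^L$. Negation leaves both the variance and the maximal deviation unchanged, which justifies the final remark $\text{Var}(-\delta)=\text{Var}(\delta)$ and $K_{-L}=K_L$. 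With these quantities, Lemma~\ref{lem:bernstein_inequality} applies directly, with $\sigma^2=\text{Var}(Z)$ and $M=K_Z$.

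\textbf{Lower bounds.} For \eqref{eq:bernstein_F_deltaU}, take $Z=\delta^U(\mathbf X)$ and $t=\epsilon-\mu^U$. If $t\ge 0$, then
\[
F_{\delta^U}(\epsilon)=1-\mathbb P(Z>\epsilon)\ge 1-\mathbb P(Z-\mathbb E Z\ge t),
\]
and the right-tail Bernstein bound gives the stated expression. If $t<0$, the trivial bound $F\ge 0$ is used. For \eqref{eq:bernstein_F_minus_deltaL}, take $Z=-\delta^L(\mathbf X)$, which has mean $-\mu^L$, and set $t=\epsilon+\mu^L$; the argument is then identical.

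\textbf{Upper bounds.} For \eqref{eq:bernstein_F_deltaL}, when $\epsilon\le\mu^L$, use the mirrored variable $Z'=\mu^L-\delta^L(\mathbf X)$. It has mean zero, the same variance as $\delta^L(\mathbf X)$, and the same maximal deviation $K_L$. Then
\[
F_{\delta^L}(\epsilon)=\mathbb P(Z'\ge \mu^L-\epsilon),
\]
and Bernstein with $t=\mu^L-\epsilon\ge 0$ yields the exponent $-(\epsilon-\mu^L)^2/(2\text{Var}(\delta^L)+\tfrac23 K_L(\mu^L-\epsilon))$. Otherwise the trivial bound $1$ is used. For \eqref{eq:bernstein_F_minus_deltaU}, apply the same argument to $-\delta^U(\mathbf X)$, whose mean is $-\mu^U$, with $t=-\mu^U-\epsilon\ge0$.

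\textbf{Main obstacle.} The only delicate point is ensuring that Lemma~\ref{lem:bernstein_inequality} is legitimately applicable, i.e., that it is used in a form valid for a single bounded variable rather than for sums. The plan is to invoke it in its standard form for sums of independent bounded summands: write $Z-\mathbb E Z=\sum_i C_{\cdot,i}(X_i-\mathbb E X_i)$ and take $M$ as the per-summand bound $\max_i|C_{\cdot,i}|(u_i-l_i)/2$. Since this per-summand bound is dominated by the total deviation $K_Z$, the inequality stated with $K_Z$ is weaker and therefore still valid. A second check is the sign of the linear correction term $\tfrac23 M t$: it must use $t\ge0$ in every case, which is why the mirrored variables are introduced and why \eqref{eq:bernstein_F_deltaL} and \eqref{eq:bernstein_F_minus_deltaU} carry $(\mu^L-\epsilon)$ and $(-\mu^U-\epsilon)$ rather than $(\epsilon-\mu)$.
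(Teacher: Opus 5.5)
Your proposal is correct and matches the paper's proof. Each CDF component is reduced to a one-sided tail event, using the mirrored variables $\mu^L-\delta^L(\mathbf X)$ and $\delta^U(\mathbf X)-\mu^U$ for the two upper bounds so that $t\ge 0$ throughout; Lemma~\ref{lem:bernstein_inequality} is applied with $\sigma^2=\text{Var}(Z)$ and $M=K_Z$, and the trivial bounds $0$ and $1$ cover the remaining cases. The ``main obstacle'' you raise is not really one, since the lemma as stated is just the $n=1$ case of Bernstein's inequality and holds directly for a single bounded variable with $M=\max|Z-\mathbb E[Z]|$; your detour through the sum form with the smaller per-summand bound is nonetheless also valid.
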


\begin{proof}
The derivation follows the same four-step structure as Theorem \ref{thm:hoeffding_bound_application}, but explicitly utilizes the variance-aware bound provided by Lemma \ref{lem:bernstein_inequality}.

\noindent\textbf{Derivation of Lower Bound $F_{\delta^U}(\epsilon)$ (Eq. \ref{eq:bernstein_F_deltaU}):}
We bound the right tail $\mathbb{P}(\delta^U(\mathbf{X}) > \epsilon)$ using Lemma \ref{lem:bernstein_inequality} with $Z=\delta^U(\mathbf{X})$, $\sigma^2 = \text{Var}(\delta^U)$, $M = K_U$, and $t = \epsilon - \mu^U$. If $t \ge 0$:
$$
\mathbb{P}(\delta^U(\mathbf{X}) > \epsilon) \le \exp\left( -\frac{(\epsilon - \mu^U)^2}{2\text{Var}(\delta^U) + \frac{2}{3} K_U (\epsilon - \mu^U)} \right)
$$
Thus, $F_{\delta^U}(\epsilon) = 1 - \mathbb{P}(\delta^U(\mathbf{X}) > \epsilon) \ge 1 - \exp\left( -\frac{(\epsilon - \mu^U)^2}{2\text{Var}(\delta^U) + \frac{2}{3} K_U (\epsilon - \mu^U)} \right)$. The derivation for $F_{-\delta^L}(\epsilon)$ follows analogously.

\noindent\textbf{Derivation of Upper Bound $F_{\delta^L}(\epsilon)$ (Eq. \ref{eq:bernstein_F_deltaL}):}
We bound the left tail $F_{\delta^L}(\epsilon) = \mathbb{P}(\delta^L(\mathbf{X}) \le \epsilon)$ by applying the right-tail bound of Lemma \ref{lem:bernstein_inequality} to the mirrored variable $Z' = \mu^L - \delta^L(\mathbf{X})$. We set $t = \mu^L - \epsilon$. If $t \ge 0$:
$$
F_{\delta^L}(\epsilon) = \mathbb{P}(Z' \ge t) \le \exp\left( -\frac{(\mu^L - \epsilon)^2}{2\text{Var}(\delta^L) + \frac{2}{3} K_L (\mu^L - \epsilon)} \right)
$$
The derivation for $F_{-\delta^U}(\epsilon)$ follows analogously using the left-tail bound.
\end{proof}

\bigskip

\noindent\textbf{Overall Functionality}
Our probabilistic framework offers two complementary modes for verifying neural network similarity, translating the theoretical bounds from Theorem \ref{thm:hoeffding_bound_application} and Theorem \ref{thm:bernstein_bound_application} into practical certification tools.

\noindent\textit{Probability-Bound Computation.} Given an error tolerance $\epsilon$ and input region $\mathcal{R}$, this mode computes probabilistic guarantees for network similarity. Using the derived bounds, it outputs the interval:
\begin{align}
    \underbrace{F_{\delta^U}(\epsilon) + F_{-\delta^L}(\epsilon) - 1}_{\gamma_{\min}} \leq \mathbb{P}(|\delta(\mathbf{X})| \leq \epsilon) \\
\mathbb{P}(|\delta(\mathbf{X})| \leq \epsilon)\leq \underbrace{F_{\delta^L}(\epsilon) + F_{-\delta^U}(\epsilon) - 1}_{\gamma_{\max}}
\end{align}

This provides a confidence interval $[\gamma_{\min}, \gamma_{\max}]$ quantifying the probability that output discrepancies between $f$ and $f'$ remain within $\epsilon$ over $\mathcal{R}$. The approach incorporates methodological flexibility through progressive refinement of probability intervals via domain partitioning.

\noindent\textit{Certified-Radius Computation.} For applications requiring fixed confidence levels (e.g., $\gamma = 0.9999$ for aviation systems) and reference point $\mathbf{x}_0$, this mode determines the maximum certified radius $r^*$: the largest radius satisfying
\[
\inf_{\mathbf{x} \in \mathcal{B}_\infty(\mathbf{x}_0, r^*)} \mathbb{P}(|\delta(\mathbf{X})| \leq \epsilon) \geq \gamma
\]
where $\mathcal{B}_\infty(\mathbf{x}_0, r)$ denotes the $\ell_\infty$-norm ball centered at $\mathbf{x}_0$ with radius $r$. This is achieved by solving
\[
F_{\delta^U}(\epsilon) + F_{-\delta^L}(\epsilon) - 1 = \gamma
\]
using root-finding algorithms (e.g., bisection search) over $r$ within $\mathcal{B}_\infty(\mathbf{x}_0, r)$, leveraging the functional forms of $F_{\delta^U}$ and $F_{-\delta^L}$ derived in Theorem \ref{thm:hoeffding_bound_application} or Theorem \ref{thm:bernstein_bound_application}. The resulting $r^*$ defines the operational envelope where network similarity holds with confidence $\gamma$, which is essential for deployment certification.

\section{Experimental Evaluation}\label{implement}
Our experiments are conducted on a Linux server with Ubuntu 18.04, using an Intel Xeon E5-2678 v3 processor to ensure consistent computational conditions. We implemented SimCert using Python. SimCert accepts neural network as input in Pytorch format.

\subsection{Research Questions}
This study evaluates the SimCert probabilistic verification framework for certifying behavioral similarity in neural network compression scenarios. Through systematic comparisons with PROVEN~\cite{PROVEN} and worst-case certification baselines, we address the following research questions that correspond to the two overall functionalities mentioned in Section~\ref{probabilistic bound computation}:


\begin{enumerate}
\item Does variance-aware bounding improve probability-bound computation? We examine whether SimCert yields greater width reduction ($\Delta(\gamma_U-\gamma_L) = 1 - (\gamma_U-\gamma_L)$) under identical computational constraints  across different error tolerances ($\epsilon$).

\item  In high-dimensional spaces where partitioning effectiveness diminishes, how does SimCert's certification performance compare to baselines~\cite{PROVEN,neurodiff}? We assess whether our approach certifies larger radii than PROVEN at equivalent confidence levels, and evaluate its robustness under network compression scenarios with structured pruning.
\end{enumerate}

\subsection{Experimental Results}
\noindent\textbf{ACAS Benchmark. }
The ACAS Xu benchmark~\cite{acas} comprises 45 neural networks for verifying aircraft collision avoidance systems. Each feedforward network has 5 inputs (relative distance $\rho$, relative heading $\theta$, velocities $v_{own}$ and $v_{int}$), 5 advisory outputs, and 6 hidden layers totaling 300 neurons.

Experimental results comparing width reduction using our approach (Bernstein's inequalities) versus Hoeffding's inequalities~\cite{PROVEN} are shown in Figure~\ref{width-reduction}. Across all 45 networks, our method consistently achieved greater width reduction, particularly under tighter output discrepancy budgets ($\epsilon$ = 0.01, 0.05, 0.1). Three clear patterns emerged from the data: At $\epsilon$=0.1, most measurement points exceeded the reference line; at $\epsilon$=0.05, clustering intensified above the line; and at $\epsilon$=0.01, nearly all points formed tight clusters above the line. This demonstrates that SimCert's advantage strengthens with increasingly stringent constraints.



\begin{figure*}[htbp]
\centering
\includegraphics[width=0.95\linewidth]{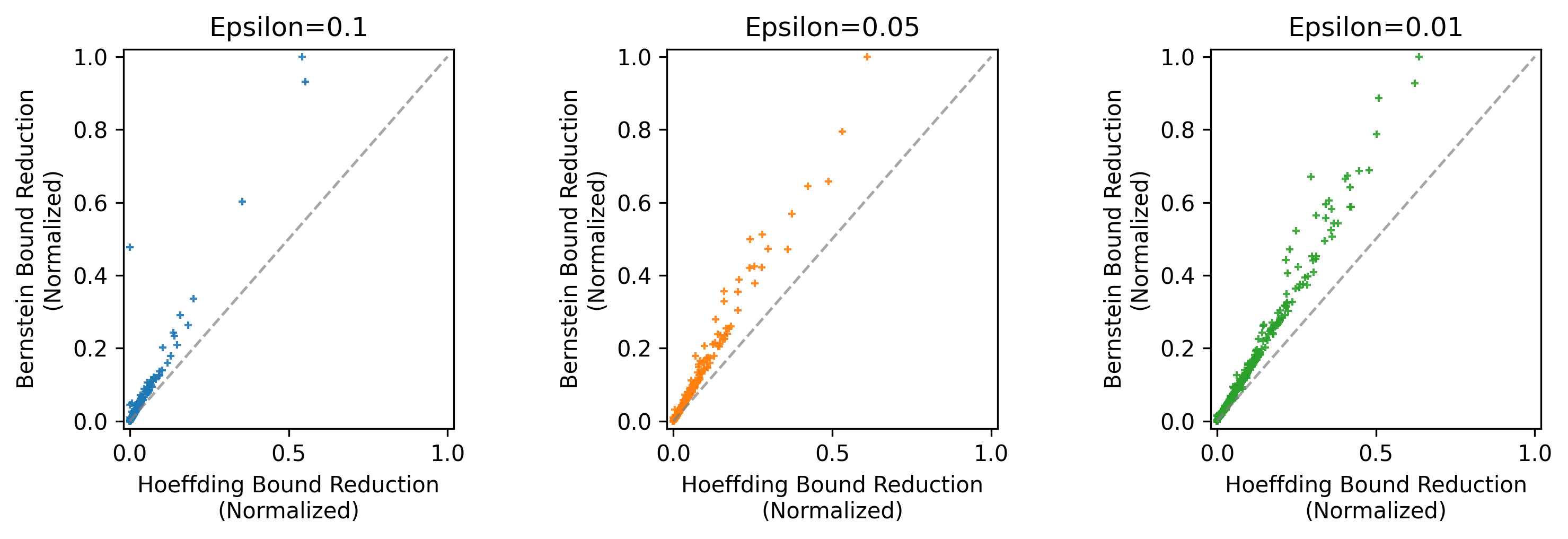}
\caption{Comparison of width reduction between Bernstein's (Our approach) and Hoeffding's (PROVEN) inequalities}
\label{width-reduction}
\end{figure*}





\noindent\textbf{MNIST. }
The MNIST handwritten digit dataset~\cite{LeCunBBH98} serves as a standard benchmark for image classification algorithms. It contains 28×28 pixel grayscale images with intensity values normalized to [0,1], providing 784 input features. Classification networks typically use an output layer with 10 nodes corresponding to digits 0-9. Final classification is determined by argmax selection.
\begin{table*}[h!]
\centering
\caption{Average certified radii comparison between PROVEN~\cite{PROVEN} and Bernstein's methods (Our approach)}
\label{tab:certification_comparison}
\begin{tabular}{lcl*{4}{S[table-format=1.5]}}
\toprule
\multirow{2}{*}{Network} &\multirow{2}{*}{Method} & \multicolumn{5}{c}{Certification Confidence}  \\
\cmidrule(lr){3-7}
 & & {5\%} & {25\%} & {50\%} & {75\%} & {99.99\%} \\
\midrule
\multirow{2}{*}{MNIST3x[1024]}
 &PROVEN& 0.09323 & 0.07939 & 0.06483 & 0.05092 & 0.02122 \\
 &Our approach& \textbf{0.15277 }& \textbf{0.13035} & \textbf{0.10604} & \textbf{0.08276} &\textbf{ 0.03241} \\
\cmidrule(r){2-7}
 &Improvement (\%)& 63.86 & 64.18 & 63.56 & 62.52 & 52.73 \\
 \midrule
\multirow{2}{*}{MNIST4x[1024]}
 &PROVEN& 0.05650 & 0.04680 & 0.03735 & 0.02896 & 0.01193 \\
 &Our approach& \textbf{0.09240} & \textbf{0.07624} & \textbf{0.06037} & \textbf{0.04619} & \textbf{0.01715} \\
\cmidrule(r){2-7}
 &Improvement (\%)& 63.54 & 62.91 & 61.63 & 59.50 & 43.76 \\
\bottomrule
\end{tabular}
\end{table*}

Probabilistic certification results under local perturbations (15 pixels, each can be perturbed from 0 to 1) are presented in Table~\ref{tab:certification_comparison} for both 3x[1024] and 4x[1024] architectures. Bernstein's method consistently outperformed Hoeffding's across all confidence levels. In the 3-layer network, certified radii increased by 63.86\% at 5\% confidence (from 0.09323 to 0.15277), with similar improvements observed at other confidence levels. The 4-layer network showed comparable gains, including a 61.63\% improvement at 50\% confidence. Notably, relative improvement decreased at extreme confidence levels, falling to 52.73\% at 99.99\% confidence compared to 63.86\% at 5\% confidence. Additionally, deeper networks exhibited smaller absolute certified radii, with a 43.05\% reduction observed at 50\% confidence when comparing the 4-layer to the 3-layer network.

\noindent\textbf{CIFAR-10. }
The CIFAR-10 dataset~\cite{krizhevsky2009learning} contains 32×32 pixel RGB images with 3 channels, providing 3072 input features normalized to [0,1]. We evaluate classification networks with 5 fully connected layers of 2048 neurons each, generating 10-dimensional logit outputs for classification via argmax selection.
This high-dimensional image recognition task tests SimCert's ability to certify behavioral similarity between original and pruned networks under structured compression.

Networks were pruned at four ratios (1\%, 5\%, 10\%, 15\%), with corresponding $\epsilon$ similarity thresholds set at 0.015, 0.075, 0.1, and 0.12 respectively. This reflects the intuitive requirement that lower pruning ratios demand stricter similarity guarantees. The input are perturbed globally (from 0 to 1) on 30 randomly selected pixels. Certification was performed at worst-case ($r_{\text{wc}}$, computed via Neurodiff~\cite{neurodiff}) and 99.99\% confidence across two methods: PROVEN~\cite{PROVEN} ($r^{P}_{0.9999}$), and our approach ($r^{B}_{0.9999}$).

Table~\ref{tab:cert_radius_grouped} reveals consistent certification patterns across compression levels, where our method ($r^{B}{0.9999}$) outperforms both alternatives at every pruning ratio. At 1\% compression ($\epsilon$=0.015), Bernstein provides 6.96\% larger radii than PROVEN ($r^{P}{0.9999}$) and 8.71\% improvement over worst-case certification ($r_{\text{wc}}$). This advantage increases to 12.54\% over Hoeffding and 20.02\% over worst-case at 5\% pruning ($\epsilon$=0.075). Crucially, at the 10\% pruning level ($\epsilon$=0.1) where all methods show reduced certification capacity, SimCert maintains a 7.89\% margin over PROVEN and 16.83\% advantage over worst-case. For the most aggressive compression (15\%, $\epsilon$=0.12), SimCert delivers 9.55\% improvement over PROVEN and 19.35\% over worst-case certification.

\begin{table*}[t]
\centering
\small
\caption{Certified radii comparison for pruned CIFAR-10 network under different compression intensities}
\setlength{\tabcolsep}{6pt}
\begin{tabular}{
  l
  S[table-format=1.2]
  S[table-format=1.4]
  S[table-format=1.6]
  S[table-format=1.6]
  S[table-format=1.6]
}
\toprule
{Network} & {pruning\_ratio} & {$\epsilon$} & {$r_{\text{wc}}$} & {$r^{P}_{0.9999}$}  & {$r^{B}_{0.9999}$}\\
\midrule
\multirow{3}{*}{5x[2048]}
  & 0.01 & 0.015 & 0.459902    & 0.467462 & \textbf{0.499966}\\
  & 0.05 & 0.075 &  0.309517   &  0.330092  & \textbf{0.371502} \\
  & 0.10 & 0.1 & 0.299285     & 0.324073  & \textbf{ 0.349674} \\
  & 0.15 & 0.12 &   0.368071     &    0.400997  & \textbf{0.439278}  \\
\bottomrule
\end{tabular}

\label{tab:cert_radius_grouped}
\end{table*}

\subsection{Discussion}
Our approach consistently demonstrated superior performance across both benchmarks, achieving greater width reduction in ACAS Xu verification and larger certified radii in MNIST and CIFAR certification. This advantage primarily stems from Bernstein's incorporation of second-order moment information, which yields tighter probability bounds for distributions with finite variance.

The experimental results reveal several important patterns. First, SimCert relative advantage amplifies under stricter constraints, as evidenced in the ACAS Xu benchmark where lower $\epsilon$ values necessitated more input domain bisections, thereby reducing sub-region variance and enhancing SimCert's efficacy. Second, the confidence level analysis in MNIST shows SimCert's relative improvement decreases at extreme confidence levels (99.99\% versus 5\%), reflecting fundamental differences in probability bound behaviors: SimCert provides tighter bounds in low-probability tails while PROVEN exhibits slower probability decay at extreme confidence levels. Third, architectural differences impact verification outcomes, with deeper MNIST networks showing smaller certified radii potentially due to increased gradient propagation complexity amplifying perturbation effects during symbolic propagation.
These differential behaviors suggest potential value in hybrid verification approaches. The consistent outperformance of our method highlights its effectiveness when input partitioning reduces sub-domain variance, though its implementation requires consideration of specific verification contexts and network architectures.

Additionally, the CIFAR results extend our MNIST observations to higher-dimensional inputs with network structural change. Our persistent advantage (7.0-16.2\% improvement over PROVEN~\cite{PROVEN}) confirms its tighter bounds transfer to complex RGB inputs despite the 4× higher dimensionality.
The consistent probabilistic advantage over worst-case certification (minimum 15.7\% gain) highlights how our approach enables a much larger certified radii with only 0.01\% loss on the confidence level.

\section{Related Works}\label{relatedwork}

Two main formal verification strategies for deep neural networks are constraint-based verification~\cite{Katz2019TheMF,Katz2017ReluplexAE} and abstraction-based verification~\cite{Albarghouthi2021IntroductionTN}. Abstraction-based approaches~\cite{2018AI,Singh2019AnAD,2018Fast} generally prioritize scalability over completeness. Techniques such as interval analysis and bound propagation~\cite{2018Efficient,2018Formal} represent specific forms of state abstraction. CROWN~\cite{Zhang2018EfficientNN} extends bound propagation from value-based bounds to linear function-based neuron bounds, improving scalability but sacrificing verification completeness. Wang et al.~\cite{Wang2021BetaCROWNEB} address this trade-off using Branch-and-Bound (BaB) with GPU acceleration, maintaining completeness while enhancing scalability.

Existing differential verification tools include ReluDiff~\cite{reludiff} and NeuroDiff~\cite{neurodiff}. These tools differ primarily in handling ReLU-induced differential outputs: ReluDiff concretizes nonlinear ReLU states, whereas NeuroDiff employs linear convex relaxations for differential output bounds, achieving greater precision. Both adopt forward verification methodologies and share a fundamental limitation, that they provide no diagnostic insights when verification returns SAT (satisfiable) results.

Probabilistic verification serves as a complementary approach to qualitative methods. Mainstream frameworks extend deterministic verification probabilistically, exemplified by PROVEN~\cite{PROVEN}. This system generalizes worst-case robustness verification to probabilistic settings, formalizing stability certification under specified input noise distributions. Empirical results demonstrate PROVEN's effectiveness over traditional certification metrics, though it remains restricted to single-network architectures.
Recent work applies probabilistic verification to neural network compression scenarios. \cite{DowningEDLBKB24} employ symbolic execution and model counting to verify user-defined robustness properties quantitatively. Their approach reports both satisfaction status and the number of violating inputs. \cite{MarzariCCF23} propose \#DNN-Verification to quantify safety violations by counting unsafe inputs, introducing both an exact counter and the probabilistic CountingProVe approximation. \cite{HuangYWDWJB14} formally verify equivalence between quantized and floating-point neural networks. To address scalability limitations in baseline MILP solvers, they develop a two-tier parallel verification framework with property-based partitioning that leverages HPC clusters.

\section{Conclusion}\label{conclusion}
This work addresses the challenge of providing formal guarantees for behavioral similarity in compressed neural networks, which is essential for deploying resource-efficient models in safety-critical applications. We have formally defined the similarity certification problem and identified two main obstacles: architectural heterogeneity from compression techniques like pruning and quantization, and limitations in existing qualitative verification approaches.
To address these challenges, we introduced SimCert, a probabilistic certification framework that integrates dual-network symbolic propagation with variance-aware probabilistic bounds using Bernstein's inequality. This combination enables precise modeling of error accumulation across networks with different architectures while providing tighter certification regions than traditional approaches.

Looking forward, our current implementation focuses on feedforward networks, but future work will extend probabilistic verification to dynamic architectures such as RNNs and Transformers. We also plan to broaden similarity certification to emerging compression techniques like knowledge distillation. This work facilitates more reliable deployment of efficient deep learning systems in resource-constrained environments through rigorous safety guarantees.

%
%

\bibliographystyle{splncs04}
\bibliography{reference}
\end{document}